\newcommand{\rd}{{\mathbb{R}^d}}
\newcommand{\Exp}{\mathbb E}
\newcommand{\D}{\mathcal D}
\newtheorem{thm}{Theorem}[section]
\newtheorem{rem}[thm]{Remark}
\numberwithin{equation}{section}
\title{A Unified Spectral Method for FPDEs with Two-sided Derivatives; A Fast Solver} 
\author{
	Mehdi Samiee
	\footnote{D\lowercase{epartment of} C\lowercase{omputational} M\lowercase{athematics}, S\lowercase{cience}, \lowercase{and}, E\lowercase{ngineering} \& D\lowercase{epartment of} M\lowercase{echanical} E\lowercase{ngineering},	
		M\lowercase{ichigan} S\lowercase{tate} U\lowercase{niversity}, 428 S S\lowercase{haw} L\lowercase{ane}, E\lowercase{ast} L\lowercase{ansing}, MI 48824, USA}
	, Mohsen Zayernouri
	\footnote{D\lowercase{epartment of} C\lowercase{omputational} M\lowercase{athematics}, S\lowercase{cience}, \lowercase{and}, E\lowercase{ngineering} \&
		D\lowercase{epartment of} M\lowercase{echanical} E\lowercase{ngineering},	
		M\lowercase{ichigan} S\lowercase{tate} U\lowercase{niversity}, 428 S S\lowercase{haw} L\lowercase{ane}, E\lowercase{ast} L\lowercase{ansing}, MI 48824, USA,  C\lowercase{orresponding author; zayern@msu.edu}}
	AND Mark M. Meerschaert
	\footnote{D\lowercase{epartment of} S\lowercase{tatistics and} P\lowercase{robability}, M\lowercase{ichigan} S\lowercase{state} U\lowercase{niversity}, 619 R\lowercase{ed} C\lowercase{edar} R\lowercase{oad} W\lowercase{ells} H\lowercase{all}, E\lowercase{ast} L\lowercase{ansing}, MI 48824, USA}
}
\begin{document}

\maketitle

\begin{abstract}
%
We develop a unified Petrov-Galerkin spectral method for a class of fractional partial differential equations with two-sided derivatives and constant coefficients of the form 
$ \prescript{}{0}{\mathcal{D}}_{t}^{2\tau} u^{}  
 +  
\sum_{i=1}^{d} 
[c_{l_i}\prescript{}{a_i}{\mathcal{D}}_{x_i}^{2\mu_i} u^{} +c_{r_i}\prescript{}{x_i}{\mathcal{D}}_{b_i}^{2\mu_i} u^{} ]
+ \gamma\,\, u^{}  
= \sum_{j=1}^{d} [
\kappa_{l_j} \prescript{}{a_j}{\mathcal{D}}_{x_j}^{2\nu_j} u^{} +\kappa_{r_j}\prescript{}{x_j}{\mathcal{D}}_{b_j}^{2\nu_j} u^{} ] + f$, where $2\tau \in (0,2)$, $2\mu_i \in (0,1)$ and $2\nu_j \in (1,2)$, in a ($1+d$)-dimensional \textit{space-time} hypercube, $d = 1, 2, 3, \cdots$, subject to homogeneous Dirichlet initial/boundary conditions. We employ the eigenfunctions of the fractional Sturm-Liouville eigen-problems of the first kind in \cite{zayernouri2013fractional}, called \textit{Jacobi poly-fractonomial}s, as temporal bases, and the eigen-functions of the boundary-value problem of the second kind as temporal test functions. Next, we construct our spatial basis/test functions using Legendre polynomials, yielding mass matrices being independent of the spatial fractional orders ($\mu_i, \, \nu_j, \, i, \,j=1,2,\cdots,d$). Furthermore, we formulate a novel unified fast linear solver for the resulting high-dimensional linear system based on the solution of generalized eigen-problem of spatial mass matrices with respect to the corresponding stiffness matrices, hence, making the complexity of the problem optimal, i.e., $\mathcal{O}(N^{d+2})$. We carry out several numerical test cases to examine the CPU time and convergence rate of the method. The corresponding stability and error analysis of the Petrov-Galerkin method are carried out in \cite{samiee2016Unified2}.
\end{abstract}
\begin{keywords}
Anomalous transport, high-dimensional FPDEs, diffusion-to-wave dynamics, Jacobi poly-fractonomial, Legendre polynomials, unified fast solver, spectral convergence
\end{keywords}

\pagestyle{myheadings}
\thispagestyle{plain}

\section{Introduction}
\label{Sec: Intro}
Fractional calculus seamlessly generalizes the notion of standard integer-order calculus to its fractional-order counterpart, leading to a broader class of mathematical models, namely fractional ordinary differential equations (FODEs) and fractional partial differential equations (FPDEs) \cite{podlubny1998fractional, meerschaert2012stochastic, guo2015fractional, samko1993fractional, carpinteri2014fractals}. Such non-local models appear as tractable mathematical tools to describe anomalous transport, which manifests in memory-effects, non-local interactions, power-law distributions, sharp peaks, and self-similar structures \cite{Klages2008, meerschaert2012stochastic, metzler2000random, zaslavsky1999physics}. Although anomalous, such phenomena are observed in a range of applications e.g., bioengineering \cite{perdikaris2014fractional, magin2006fractional, regner2014randomness, naghibolhosseini2015estimation}, turbulent flows \cite{solomon1993observation, solomon1994chaotic, meerschaert2014tempered, del2004fractional, del1993chaotic, zayernouri2016fractional}, porous media \cite{benson2001fractional, Benson2000, vafai2015handbook}, viscoelastic materials \cite{mainardi2010fractional}.  

Due to their history dependence and non-local character, the discretization of such problems becomes computationally challenging. Numerical methods, developed to discretize FPDEs, can be categorized in two major classes: i) local methods, e.g., finite difference method (FDM), finite volume method (FVM), and finite element method (FEM), and ii) global methods, e.g., single and multi-domain spectral methods (SM).

Local schemes have been studied extensively in the literature. Lubich introduced the discretized fractional calculus within the spirit of FDM \cite{lubich1986discretized}. Sugimoto employed a FDM scheme for approximating fractional Burger's equation \cite{sugimoto1991burgers, sugimoto1989generalized}. Meerschaert and Tadjeran \cite{meerschaert2004finite} developed finite difference approximations to solve one-dimensional advection-dispersion equations with variable coefficients on a finite domain. Tadjeran and Meerschaert \cite{tadjeran2007second} employed a practical alternating directions implicit (ADI) method to solve a class of fractional partial differential equations with variable coefficients in bounded domain. Hejazi et al. \cite{hejazi2013finite} developed a finite-volume method utilizing fractionally shifted grunwald formula for the fractional derivatives for space-fractional advection-dispersion equation on a finite domain. To solve the two-dimensional two-sided space-fractional convection diffusion equation, Chen and Deng \cite{chen2014second} proposed a practical alternating directions implicit method. Zeng et al., \cite{zeng2015numerical} constructed a finite element method and a multistep method for unconditionally stable time-integration of sub-diffusion problem. In addition, Zhao et al. developed second-order FDM for the variable-order FPDEs in \cite{zhao2015second}. Li et al. \cite{li2016linear} proposed an implicit finite difference scheme for solving the generalized time-fractional Burger's equation. Recently, Feng et al. \cite{feng2016high} proposed a second-order Crank-Nicolson scheme to approximate the Riesz space-fractional advection-dispersion equations (FADE).  Moreover, two compact non-ADI FDMs have been proposed for the high-dimensional time-fractional sub-diffusion equation by Zeng et al. \cite{zeng2016fast}. Recently, Zayernouri and Matzavinos \cite{zayernouri2016fractional} have developed an explicit fractional adams/Bashforth/Moulton and implicit fractional Adams-Moulton finite difference methods, applicable to high-order time-integration of nonlinear FPDEs and amenable for formulating implicit/explicit (IMEX) splitting methods.

Regarding global methods, Sugimoto \cite{sugimoto1991burgers, sugimoto1989generalized} used Fourier SM in a fractional Burger's equation. Shen and Wang \cite{shen2007fourierization}  constructed a set of Fourier-like basis functions for Legendre-Galerkin method for non-periodic boundary value problems and proposed a new space-time spectral method. Sweilam et al. \cite{sweilam2014chebyshev} considered Chebyshev Pseudo-spectral method for solving one-dimensional FADE, where the fractional derivative is described in Caputo sense. Chen et al. \cite{chen2015multi} developed an approach for high-order time integration within multi-domain setting for time-fractional diffusion equations. Mokhtary developed a fully discrete Galerkin method to numerically approximate initial value fractional integro-differential equations \cite{mokhtary2015discrete}. 

Moreover, Zayernouri and Karniadakis \cite{zayernouri2013fractional,zayernouri2015tempered} introduced a new family of basis/test functions, called \textit{(tempered)Jacobi poly-fractonomials}, known as the explicit eigenfunctions of (tempered) fractional Strum-Liouville problems in bounded domains of the first and second kind. Following this new spectral theory, they have developed a number of single- and multi-domain spectral methods \cite{Zayernouri_FDDEs_2013, zayernouri2014exponentially,  zayernouri2014fractional, zayernouri2015unified, zayernouri20}. Recently, Dehghan et al. \cite{dehghan2016analysis}, employed a Galerkin finite element and interpolating element free Galerkin methods for full discretization of the fractional diffusion-wave equation. They \cite{dehghan2016use} also introduced a full discretization of time-fractional diffusion and wave equations using meshless Galerkin method based on radial basis functions. Zaho et al., \cite{EPFL-ARTICLE-217895} developed a spectral method for the tempered fractional diffusion equations (TFDEs) using the generalized Jacobian functio \cite{chen2015generalized}. Mao and Shen \cite{mao2016efficient} developed Galerkin spectral methods for solving multi-dimensional fractional elliptic equations with variable coefficients. Besides, Lischke et al. \cite{Lischke2017petrov} presented a tunably accurate Laguerre Petrov-Galerkin spectral method for solving linear multiterm fractional initial value problems. Kharazmi et al. \cite{kharazmi2016petrov} developed a new Petrov-Galerkin spectral element method for one-dimensional fractional elliptic problems using the standard spectral element bases and the Jacobi poly-fractonomials as the test functions. 


%
The main contribution of the present work is to construct a unified Petrov-Galerkin spectral method and a unified fast solver for the weak form of linear FPDEs with constant coefficients in (1+d) dimensional \textit{space-time} hypercube of the form 
\begin{eqnarray}
\label{111111}
\nonumber
\prescript{}{0}{\mathcal{D}}_{t}^{2\tau} u^{}  
 +  
\sum_{i=1}^{d} 
[c_{l_i}\prescript{}{a_i}{\mathcal{D}}_{x_i}^{2\mu_i} u^{} +c_{r_i}\prescript{}{x_i}{\mathcal{D}}_{b_i}^{2\mu_i} u^{} ]
&=&  \sum_{j=1}^{d} 
[\kappa_{l_j}\prescript{}{a_j}{\mathcal{D}}_{x_j}^{2\nu_j} u^{} +\kappa_{r_j}\prescript{}{x_j}{\mathcal{D}}_{b_j}^{2\nu_j} u^{} ]
\\
&&
-\gamma\,\, u^{}  +
f,
\end{eqnarray}
where $2\mu_i, \, \in [0, \, 1]$, $2\nu_i, \, \in [1, \, 2]$, and $2\tau, \, \in [0, \, 2]$ subject to Dirichlet initial and boundary conditions, where $i=1, \, 2, \, ..., \, d$. Compared to the problem considered in [45], we extend the one-sided spatial derivatives to two-sided ones, also, we include an advection term in order to consider the drift effects. Employing different (Legendre polynomial) spatial basis/test functions and the additional advection term then would not allow employing the fast linear solver developed in [45]. Accordingly, we formulate a new fast linear solver for advection-dispersion problems. We additionally aim to perform the inf-sup stability analysis in any (1+d) dimensions in [2], while in [45], only the stability of 1-D problem has been carried out. Furthermore, we briefly presented the stochastic interpretation of FADE on bounded domain which sheds light on the well-posedness of the problem from the perspective of the probability theory. In \cite{samiee2016Unified2}, we also carry out the corresponding error analyses of the PG method along with several verifying numerical tests.  

The outline of this paper is as follows: in section \ref{Sec: Notation}, we introduce some preliminary results from fractional calculus. In section \ref{Sec: General FPDE}, we present the mathematical formulation of the spectral method in a (1+d) dimensional space, which leads to the generalized Lyapunov equations. In section \ref{Sec: FastSolver FPDE}, we develop a unified fast linear solver and obtain the closed-form solution in terms of the genralized eigenvalues and eigenvectors of the corresponding mass and stiffness matrices. In section \ref{num-test}, the performance of the PG method is examined via several numerical simulations for low-to- high dimensional problems with smooth and non-smooth solutions.

%
\section{Preliminaries on Fractional Calculus}
\label{Sec: Notation}
%
%
Here, we obtain some basic definitions from fractional calculus \cite{meerschaert2012stochastic, zayernouri2015unified}. Denoted by $\prescript{}{a}{\mathcal{D}}_{x}^{\nu} g(x)$, the left-sided Reimann-Liouville fractional derivative of order $\nu$ in which $g(x) \in C^{n}[a,b]$ and $n= \lceil \nu \rceil$, is defined as:
\begin{equation}
\label{eq2}
\prescript{RL}{a}{\mathcal{D}}_{x}^{\nu} g(x) = \frac{1}{\Gamma(n-\nu)}  \frac{d^{n}}{d x^n} \int_{a}^{x} \frac{g(s) }{(x - s)^{\nu +1-n} }\,\,ds,\quad x \in [a,b],
\end{equation}
where $\Gamma$ represents the Euler gamma function. The corresponding right-sided Reimann-Liouville fractional derivative of order $\nu$, $\prescript{}{x}{\mathcal{D}}_{b}^{\nu} g(x)$, is given by
\begin{equation}
\label{eq3}
\prescript{RL}{x}{\mathcal{D}}_{b}^{\nu} g(x) = \frac{1}{\Gamma(n-\nu)}  (-1)^{n} \frac{d^{n}}{d x^n} \int_{x}^{b} \frac{g(s) }{(s - x)^{\nu +1-n} }\,\,ds,\quad x \in [a,b].
\end{equation}
In \eqref{eq2} and \eqref{eq3}, as $\nu \rightarrow n$, the fractional derivatives tend to the standard $n$-th order derivative with respect to $x$. We recall from \cite{zayernouri2013fractional,Askey1969Integral} that the following link between the Reimann-Liouville and Caputo fractional derivatives, where
\begin{eqnarray}
\label{eq4}
\prescript{Rl}{a}{\mathcal{D}}_{x}^{\nu} f(x) \,=\, 
\frac{f ( a ) }{\Gamma(1-\mu)(x-a)^{\nu}}\,+\,
\prescript{C}{a}{\mathcal{D}}_{x}^{\nu} f(x) 
\\
\label{eq5}
\prescript{Rl}{x}{\mathcal{D}}_{b}^{\nu} f(x) \,=\, 
\frac{f ( b ) }{\Gamma(1-\mu)(b-x)^{\nu}}\,+\,
\prescript{C}{x}{\mathcal{D}}_{b}^{\nu} f(x),
\end{eqnarray}
where
\begin{eqnarray}
\label{eq6}
\prescript{C}{a}{\mathcal{D}}_{x}^{\nu} f(x) \,=\, 
\frac{1}{\Gamma(n-\nu)} \int_{a}^{x} \frac{g^{(n)}(s) }{(x - s)^{\nu +1-n} }\,\,ds,\quad x \in [a,b], 
\\
\label{eq7}
\prescript{C}{x}{\mathcal{D}}_{b}^{\nu} f(x) \,=\, 
\frac{(-1)^n}{\Gamma(n-\nu)} \int_{x}^{b} \frac{g^{(n)}(s) }{(x - s)^{\nu +1-n} }\,\,ds,\quad x \in [a,b].
\end{eqnarray}
In \eqref{eq4} and \eqref{eq5}, $\prescript{RL}{a}{\mathcal{D}}_{x}^{\nu} g(x)=\prescript{C}{a}{\mathcal{D}}_{x}^{\nu} g(x)=\prescript{}{a}{\mathcal{D}}_{x}^{\nu} g(x)$ when homogeneous Dirichlet initial and boundary conditions are enforced. 

To analytically obtain the fractional differentiation of our basis function, we employ the following relations \cite{zayernouri2013fractional} as:  

\begin{eqnarray}
\label{6}
\prescript{RL}{-1}{\mathcal{I}}_{x}^{\nu} \{ (1+x)^{\beta} P_{n}^{\alpha, \beta} {(x)}\} = \frac{\Gamma(n+\beta+1)}{\Gamma(n+\beta+\nu+1)}\, (1+x)^{\beta+\nu} P_{n}^{\alpha-\nu,\beta+\nu}{(x)},
\end{eqnarray}
and
\begin{eqnarray}
\label{7}
\prescript{RL}{x}{\mathcal{I}}_{1}^{\nu} \{ (1-x)^{\alpha} P_{n}^{\alpha, \beta} {(x)}\} = \frac{\Gamma(n+\alpha+1)}{\Gamma(n+\alpha+\nu+1)}\, (1-x)^{\alpha+\nu} P_{n}^{\alpha+\nu,\beta-\nu}{(x)},
\end{eqnarray}
where $0 < \nu < 1$, $\alpha > -1$, $\beta > -1$ and $P^{\alpha, \, \beta}_{n} (x)$ denote the standard Jacobi Polynomials of order n and parameters $\alpha$ and $\beta$. It is worth mentioning that 
\begin{eqnarray}
\nonumber
\prescript{RL}{a}{\mathcal{I}}_{x}^{\nu} \{ f (x)\} = \frac{1}{\Gamma(\nu)}\, \int_{a}^{x} \frac{f(s) }{(x - s)^{1-\nu} }\,\,ds,\quad x \in [a,b], \, \,
\end{eqnarray}
 and
 \begin{eqnarray}
\prescript{RL}{x}{\mathcal{I}}_{b}^{\nu} \{ f (x)\} = \frac{1}{\Gamma(\nu)}\, \int_{x}^{b} \frac{f(s) }{(s-x)^{1-\nu} }\,\,ds,\quad x \in [a,b].
\nonumber
\end{eqnarray}
By substituting $\alpha = +\nu$ and $\beta = -\nu$, we can simplify equations (\ref{6}) and (\ref{7}), thereby we have:
\begin{eqnarray}
\label{10}
\prescript{RL}{-1}{\mathcal{I}}_{x}^{\nu} \{ (1+x)^{-\nu} P_{n}^{\nu, -\nu} {(x)}\} = \frac{\Gamma(n-\nu+1)}{\Gamma(n+1)}P_{n}{(x)},  \quad x \in [-1,1]
\end{eqnarray}
and
\begin{eqnarray}
\label{11}
\prescript{RL}{x}{\mathcal{I}}_{1}^{\nu} \{ (1-x)^{-\nu} P_{n}^{-\nu, \nu} {(x)}\} = \frac{\Gamma(n-\nu+1)}{\Gamma(n+1)}P_{n}{(x)}  ,\quad x \in [-1,1].
\end{eqnarray}
Accordingly, we have the fractional derivative of Legendre polynomial by differentiating \eqref{10} and \eqref{11} as
\begin{eqnarray}
\label{Eq: 10}
\prescript{}{-1}{\mathcal{D}}_{x}^{\nu} P_{n} (x)= \frac{\Gamma(n+1)}{\Gamma(n-\nu+1)}P_{n}^{\, \nu,-\nu}{(x)}\,(1+x)^{-\nu}
\end{eqnarray}
and
\begin{eqnarray}
\label{Eq: 11}
\prescript{}{x}{\mathcal{D}}_{1}^{\nu} P_{n} (x)= \frac{\Gamma(n+1)}{\Gamma(n-\nu+1)}P_{n}^{\, -\nu,\nu}{(x)}\,(1-x)^{-\nu}, \,
\end{eqnarray}
where $P_{n} (x) = P^{\, 0,0}_{n} (x)$ represents Legendre polynomial of degree n.

\section{Mathematical Framework}
\label{Sec: General FPDE}
%
Let $u: \mathbb{R}^{d+1} \rightarrow \mathbb{R}$ for some positive integer $d$ and $\Omega =  [0,T]\times [a_1,b_1]\times [a_2,b_2] \times \cdots \times [a_d,b_d]$, where 
\begin{eqnarray}
\label{Eq: Genral_FPDE}
%
 \prescript{}{0}{\mathcal{D}}_{t}^{2\tau} u  &+&  
\sum_{i=1}^{d} 
\big{[}c_{l_i}\prescript{}{a_i}{\mathcal{D}}_{x_i}^{2\mu_i} u+c_{r_i}\prescript{}{x_i}{\mathcal{D}}_{b_i}^{2\mu_i} u\big{]}
-\sum_{j=1}^{d} 
\big{[}\kappa_{l_j}\prescript{}{a_j}{\mathcal{D}}_{x_j}^{2\nu_j} u+\kappa_{r_j}\prescript{}{x_j}{\mathcal{D}}_{b_j}^{2\nu_j} u\big{]}
+\gamma\,\, u = f,
\end{eqnarray} 
and $\gamma, c_{l_i}, \, c_{r_i}, \, \kappa_{l_j},$ and  $\kappa_{r_j}$ are all constant. Besides, $2\mu_i \in (0,1)$, $2\nu_j \in (1,2)$, and $2\tau \in (0,2)$, for $j=1,2,\cdots,d$.
This equation is subject to the following Dirichlet initial and boundary conditions as:
\begin{eqnarray}
\label{Eq: Genral_FPDE Initial/BCs}
 \nonumber
 u|_{t=0} &=& 0, \quad \tau \in (0,1/2),
 \\ \nonumber 
  u|_{t=0} &=& \frac{\partial u}{\partial t}|_{t=0} = 0, \quad \tau \in (1/2,1),
 \\ \nonumber 
 u|_{x_j=a_j}= u|_{x_j=b_j}&=&0, \quad \nu_j \in (1/2,1),\quad j=1,2,\cdots, d.
\end{eqnarray}

\subsection{\textbf{Stochastic Interpretation of the FPDEs}}
\label{statistic interp}
Following \cite{baeumer2016space}, we provide a brief stochastic interpretation of the FPDEs in \eqref{Eq: Genral_FPDE} that further sheds light on the well-posedness of the problem from the perspective of probability theory. Let suppose that in \eqref{Eq: Genral_FPDE}, $f\equiv 0$ and $\gamma=0$ and $0<2\tau<1$ and that $a_i=-\infty$ and $b_i=+\infty$ for $i=1,2,\cdots$.  Then \eqref{Eq: Genral_FPDE} governs \cite{baeumer2016space} a time-changed L\'evy process $X(E_t)$ on $\rd$ whose Fourier transform is
 $\Exp[e^{-ik\cdot X(t)}]=e^{t\psi(k)}$ with the Fourier symbol
\begin{equation}
\begin{split}\label{eq22}
\psi(k)= -\sum_{n=1}^d [c_{l_n} (ik_n)^{2\mu_n}&+c_{r_n} (-ik_n)^{2\mu_n}]
+\sum_{m=1}^d [\kappa_{l_m}(-ik_m)^{2\nu_m}+\kappa_{r_m}(ik_m)^{2\nu_m}].
\end{split}
\end{equation}
Recalling that in one dimension the L\'evy process $Y(t)$ with Fourier Transform $\Exp[e^{-ikY(t)}]=e^{t\psi_0(k)}$ where $\psi_0(k)=pD(ik)^\alpha+qD(-ik)^\alpha$ for $D>0$ and $1<\alpha\leq 2$, $p\geq 0$, $q\geq 0$, and $p+q=1$ is a stable L\'evy process with index $\alpha$ and skewness $p-q$ \cite{meerschaert2012stochastic,baeumer2016space}.  
In brief, fractional advection-dispersion equation on unbounded domain is represented by a solution involves an inverse stable subordinator time-changed, resulting in an non-Markovian process. You can find complete details in  \cite{meerschaert2012stochastic}.
\vspace{0.02cm}

Regarding a computational domain, Chen et al. \cite{chen2012space} developed a solution for the case of equation \eqref{Eq: Genral_FPDE} where $f\equiv 0$ and $\gamma=0$ and $0<2\tau<1$ and all $a_m=a_n>-\infty$ and $b_m=b_n<\infty$, with zero Dirichlet boundary conditions. It follows from \cite{baeumer2016space} that
\begin{equation}\label{gen1}
Lu(x) =c u'(x)+\kappa_{l}\, _{a}\D_{x}^{2\nu}u(x)+\kappa_{r}\, _{x}\D_{b}^{2\nu}u(x)
\end{equation}
is the generator of the killed semigroup on the bounded domain $\Omega=(a,b)$ which is also the point source to \eqref{Eq: Genral_FPDE}.  In other words, starting with the point source initial condition $u(x,0)=\delta(x)$, the solution to \eqref{Eq: Genral_FPDE} with the restrictions discussed in \cite{meerschaert2012stochastic,chen2012space} is the PDF of a \textit{killed non-Markovian} process.

\subsection{\textbf{Mathematical Framework}}
In \cite{li2010existence}, the usual Sobolev space associated with the real index $\sigma \geq 0$ on bounded interval $\Lambda = (a,b)$, is denoted by $H^{\sigma}_{}(\Lambda)$ and is defined as the completion of $C_0^{\infty}(\Lambda)$ with respect to the norm $\Vert \cdot \Vert_{H^{\sigma}_{}(\Lambda)}$. As shown in Lemma 2.6 in \cite{li2010existence}, the equivalency between the following norms holds:
\begin{equation}
\Vert \cdot \Vert_{H^{\sigma}_{}(\Lambda)} \equiv \Vert \cdot \Vert_{{^l}H^{\sigma}_{}(\Lambda)} \equiv \Vert \cdot \Vert_{{^r}H^{\sigma}_{}(\Lambda)},
\end{equation}
where 
\begin{equation}
\Vert \cdot \Vert_{{^l}H^{\sigma}_{}(\Lambda)} = \Big(\Vert \prescript{}{a}{\mathcal{D}}_{x}^{\sigma}\, (\cdot)\Vert_{L^2(\Lambda)}^2+\Vert \cdot \Vert_{L^2(\Lambda)}^2 \Big)^{\frac{1}{2}},
\end{equation}
and 
\begin{equation}
\Vert \cdot \Vert_{{^r}H^{\sigma}_{}(\Lambda)} = \Big(\Vert \prescript{}{x}{\mathcal{D}}_{b}^{\sigma}\, (\cdot)\Vert_{L^2(\Lambda)}^2+\Vert \cdot \Vert_{L^2(\Lambda)}^2 \Big)^{\frac{1}{2}}.
\end{equation}
Similarly, we can show that $\Vert \cdot \Vert_{H^{\sigma}_{}(\Lambda)} \equiv \Vert \cdot \Vert_{{^c}H^{\sigma}_{}(\Lambda)}$, defined as
\begin{equation}
\Vert \cdot \Vert_{{^c}H^{\sigma}_{}(\Lambda)} = \Big(\Vert \prescript{}{x}{\mathcal{D}}_{b}^{\sigma}\, (\cdot)\Vert_{L^2(\Lambda)}^2+\Vert \prescript{}{a}{\mathcal{D}}_{x}^{\sigma}\, (\cdot)\Vert_{L^2(\Lambda)}^2+\Vert \cdot \Vert_{L^2(\Lambda)}^2 \Big)^{\frac{1}{2}}.
\end{equation} 
Let $\Lambda_1 = (a_1,b_1)$, $\Lambda_i = (a_i,b_i) \times \Lambda_{i-1}$ for $i=2,\cdots,d$, and $\mathcal{X}_1 = H^{\nu_1}_{0}(\Lambda_1)$, with the associated norm $\Vert \cdot \Vert_{{}H^{\nu_1}_{}(\Lambda_1)} \equiv \Vert \cdot \Vert_{{^c}H^{\nu_1}_{}(\Lambda_1)}$. Accordingly, we construct $\mathcal{X}_d$ such that
\begin{eqnarray}
\mathcal{X}_2 &=& H^{\nu_2}_0 \Big((a_2,b_2); L^2(\Lambda_1) \Big) \cap L^2((a_2,b_2); \mathcal{X}_1),
\nonumber
\\
&\vdots&
\nonumber
\\
\mathcal{X}_d &=& H^{\nu_d}_0 \Big((a_d,b_d); L^2(\Lambda_{d-1}) \Big) \cap L^2((a_2,b_2); \mathcal{X}_{d-1}),
\end{eqnarray}
associated with the norm
\begin{equation}
\label{norm_Xd}
\Vert \cdot \Vert_{\mathcal{X}_d} = \bigg{\{} \Vert \cdot \Vert_{L^2(\Lambda_d)}^2 + \sum_{i=1}^{d} \Big(\Vert \prescript{}{x_i}{\mathcal{D}}_{b_i}^{\nu_i}\, (\cdot)\Vert_{L^2(\Lambda_d)}^2+\Vert \prescript{}{a_i}{\mathcal{D}}_{x_i}^{\nu_i}\, (\cdot)\Vert_{L^2(\Lambda_d)}^2 \Big) \bigg{\}}^{\frac{1}{2}}.
\end{equation}
Similarly, the Sobolev space with index $\tau>0$ on the time interval $I=(0,T)$, denoted by $H^{\tau}_{}(I)$, is endowed with norm $\Vert \cdot \Vert_{H^{\tau}_{}(I)}$, where
\begin{equation}
\Vert \cdot \Vert_{H^{\tau}_{}(I)} \equiv \Vert \cdot \Vert_{{^l}H^{\tau}_{}(I)} \equiv \Vert \cdot \Vert_{{^r}H^{\tau}_{}(I)},
\end{equation}
\begin{equation}
\Vert \cdot \Vert_{{^l}H^{\tau}_{}(I)} = \Big(\Vert \prescript{}{0}{\mathcal{D}}_{t}^{\tau}\, (\cdot)\Vert_{L^2(I)}^2+\Vert \cdot \Vert_{L^2(I)}^2 \Big)^{\frac{1}{2}},
\end{equation}
and 
\begin{equation}
\Vert \cdot \Vert_{{^r}H^{\tau}_{}(I)} = \Big(\Vert \prescript{}{t}{\mathcal{D}}_{T}^{\tau}\, (\cdot)\Vert_{L^2(I)}^2+\Vert \cdot \Vert_{L^2(I)}^2 \Big)^{\frac{1}{2}}.
\end{equation}
Let $2\tau \in (0,1)$ and $\Omega=I \times \Lambda_d$. We define
\begin{equation}
\prescript{l}{0}H^{\tau} \Big(I; L^2(\Lambda_d) \Big) := \Big{\{} u \,|\, \Vert u(t,\cdot) \Vert_{L^2(\Lambda_d)} \in H^{\tau}(I), u\vert_{t=0}=u\vert_{x=a_i}=u\vert_{x=b_i}=0,\, i=1,\cdots,d  \Big{\}},
\end{equation}
which is equipped with the norm
\begin{eqnarray}
\Vert u \Vert_{\prescript{l}{}H^{\tau}(I; L^2(\Lambda_d))} &=& \Big{\Vert} \, \Vert u(t,\cdot) \Vert_{L^2(\Lambda_d)}\, \Big{\Vert}_{{^l}H^{\tau}(I)}
= \Big(\Vert \prescript{}{0}{\mathcal{D}}_{t}^{\tau}\, u\Vert_{L^2(\Omega)}^2+\Vert u \Vert_{L^2(\Omega)}^2 \Big)^{\frac{1}{2}}.
\end{eqnarray}
Similarly,
\begin{equation}
\prescript{r}{0}H^{\tau} \Big(I; L^2(\Lambda_d) \Big) := \Big{\{} v \,|\, \Vert v(t,\cdot) \Vert_{L^2(\Lambda_d)} \in H^{\tau}(I), v\vert_{t=T}=v\vert_{x=a_i}=v\vert_{x=b_i}=0,\, i =1,\cdots,d  \Big{\}},
\end{equation}
which is equipped with the norm
\begin{eqnarray}
\Vert v \Vert_{\prescript{r}{}H^{\tau}(I; L^2(\Lambda_d))} &=& \Big{\Vert} \, \Vert v(t,\cdot) \Vert_{L^2(\Lambda_d)}\, \Big{\Vert}_{{^r}H^{\tau}(I)}
= \Big(\Vert \prescript{}{t}{\mathcal{D}}_{T}^{\tau}\, v \Vert_{L^2(\Omega)}^2+\Vert v \Vert_{L^2(\Omega)}^2\Big)^{\frac{1}{2}}
\end{eqnarray}
We define the solution space
\begin{equation}
\mathcal{B}^{\tau,\nu_1,\cdots,\nu_d} (\Omega):= \prescript{l}{0}H^{\tau}\Big(I; L^2(\Lambda_d) \Big) \cap L^2(I; \mathcal{X}_d),
\end{equation}
endowed with the norm
\begin{equation}
\Vert u \Vert_{\mathcal{B}^{\tau,\nu_1,\cdots,\nu_d}} = \Big{\{}\Vert u \Vert_{\prescript{l}{}H^{\tau}(I; L^2(\Lambda_d))}^2 + \Vert u \Vert_{L^2(I; \mathcal{X}_d)}^2 \Big{\}}^{\frac{1}{2}},
\end{equation}
where due to \eqref{norm_Xd},
\begin{eqnarray}
\Vert u \Vert_{L^2(I; \mathcal{X}_d)}&=&  \Big{\Vert} \, \Vert u(t,.) \Vert_{\mathcal{X}_d}\,\Big{\Vert}_{L^2(I)}
\nonumber
\\
&=& \Big{\{}  \Vert u \Vert_{L^2(\Omega)}^2 + \sum_{i=1}^{d} \big( \Vert \prescript{}{x_i}{\mathcal{D}}_{b_i}^{\nu_i}\, (u)\Vert_{L^2(\Omega)}^2+\Vert \prescript{}{a_i}{\mathcal{D}}_{x_i}^{\nu_i}\, (u)\Vert_{L^2(\Omega)}^2 \big) \Big{\}}^{\frac{1}{2}}.  \quad \quad
\end{eqnarray}
Therefore,
\begin{equation}
\Vert u \Vert_{\mathcal{B}^{\tau,\nu_1,\cdots,\nu_d}} = \Big{\{}  \Vert u \Vert_{L^2(\Omega)}^2 + \Vert \prescript{}{0}{\mathcal{D}}_{t}^{\tau}\, (u)\Vert_{L^2(\Omega)}^2 + \sum_{i=1}^{d} \big( \Vert \prescript{}{x_i}{\mathcal{D}}_{b_i}^{\nu_i}\, (u)\Vert_{L^2(\Omega)}^2+\Vert \prescript{}{a_i}{\mathcal{D}}_{x_i}^{\nu_i}\, (u)\Vert_{L^2(\Omega)}^2 \big) \Big{\}}^{\frac{1}{2}}. 
\end{equation}
Likewise, we define the test space
\begin{equation}
\mathfrak{B}^{\tau,\nu_1,\cdots,\nu_d} (\Omega) := \prescript{r}{}H^{\tau}\Big(I; L^2(\Lambda_d)\Big) \cap L^2(I; \mathcal{X}_d),
\end{equation}
endowed with the norm
\begin{eqnarray}
\Vert v \Vert_{\mathfrak{B}^{\tau,\nu_1,\cdots,\nu_d}} &=& \Big{\{}\Vert v \Vert_{\prescript{r}{}H^{\tau}(I; L^2(\Lambda_d))}^2  + \Vert v \Vert_{ L^2(I; \mathcal{X}_d)}^2 \Big{\}}^{\frac{1}{2}}.
\nonumber
\\
&=& \Big{\{}  \Vert v \Vert_{L^2(\Omega)}^2 + \Vert \prescript{}{t}{\mathcal{D}}_{T}^{\tau}\, (v)\Vert_{L^2(\Omega)}^2 + \sum_{i=1}^{d} \big( \Vert \prescript{}{x_i}{\mathcal{D}}_{b_i}^{\nu_i}\, (v)\Vert_{L^2(\Omega)}^2+\Vert \prescript{}{a_i}{\mathcal{D}}_{x_i}^{\nu_i}\, (v)\Vert_{L^2(\Omega)}^2 \big) \Big{\}}^{\frac{1}{2}}. \quad \quad
\end{eqnarray}
In case $2\tau \in (1,2)$, we define the solution space as
\begin{equation}
\mathcal{B}^{\tau,\nu_1,\cdots,\nu_d}(\Omega) := \prescript{l}{0,0}H^{\tau}\Big(I; L^2(\Lambda_d)\Big) \cap L^2(I; \mathcal{X}_d),
\end{equation}
where 
\begin{eqnarray}
\prescript{l}{0,0}H^{\tau}\Big(I; L^2(\Lambda_d)\Big) &:=& \Big{\{} u \,|\, \Vert u(t,\cdot) \Vert_{L^2(\Lambda_d)} \in H^{\tau}(I), 
\nonumber
\\
&&
\frac{\partial u}{\partial t}\vert_{t=0}= u\vert_{t=0}=u\vert_{x=a_i}=u\vert_{x=b_i}=0,\, i=1,\cdots,d  \Big{\}},
\nonumber
\end{eqnarray}
which is associated with $\Vert \cdot \Vert_{\mathcal{B}^{\tau,\nu_1,\cdots,\nu_d}}$. 
The corresponding test space is also defined as
\begin{equation}
\mathfrak{B}^{\tau,\nu_1,\cdots,\nu_d}(\Omega) := \prescript{r}{0,0}H^{\tau}\Big(I; L^2(\Lambda_d)\Big) \cap L^2(I; \mathcal{X}_d),
\end{equation}
where 
\begin{eqnarray}
\prescript{r}{0,0}H^{\tau}\Big(I; L^2(\Lambda_d)\Big) &:=& \Big{\{} v \,|\, \Vert v(t,\cdot) \Vert_{L^2(\Lambda_d)} \in H^{\tau}(I),
\nonumber
\\
&&
 \frac{\partial v}{\partial t}\vert_{t=T}= v\vert_{t=T}=v\vert_{x=a_i}=v\vert_{x=b_i}=0,\, i=1,\cdots,d  \Big{\}},
 \nonumber
\end{eqnarray}
which is endowed with $\Vert \cdot \Vert_{\mathfrak{B}^{\tau,\nu_1,\cdots,\nu_d}}$. 

\subsection{\textbf{Petrov-Galerkin Method}}
\label{PGmethod}

Next, we define the corresponding bilinear form as 
\begin{eqnarray}
a(u,v)&=&
(\prescript{}{0}{\mathcal{D}}_{t}^{\tau}\, u, \prescript{}{t}{\mathcal{D}}_{T}^{\tau}\, v )_{\Omega}  
\nonumber
\\
&&+
\sum_{i=1}^{d} \big{[}c_{l_i}  ( \prescript{}{a_i}{\mathcal{D}}_{x_i}^{\mu_i}\, u,\, \prescript{}{x_i}{\mathcal{D}}_{b_i}^{\mu_i}\, v )_{\Omega}+ c_{r_i}  ( \prescript{}{x_i}{\mathcal{D}}_{a_i}^{\mu_i}\, u,\, \prescript{}{a_i}{\mathcal{D}}_{x_i}^{\mu_i}\, v )_{\Omega} \big{]}
\nonumber
\\ 
&&
-
\sum_{j=1}^{d} \big{[}\kappa_{l_j}  ( \prescript{}{a_j}{\mathcal{D}}_{x_j}^{\nu_j}\, u,\, \prescript{}{x_j}{\mathcal{D}}_{b_j}^{\nu_j}\, v )_{\Omega}+ \kappa_{r_j}  ( \prescript{}{x_j}{\mathcal{D}}_{b_j}^{\nu_j}\, u,\, \prescript{}{a_j}{\mathcal{D}}_{x_j}^{\nu_j}\, v )_{\Omega} \big{]}
\nonumber
\\ 
&&
+\gamma 
(u,v)_{\Omega}.
\end{eqnarray}
Now, the problem reads as: find $u \in \mathcal{B}^{\tau,\nu_1,\cdots,\nu_d}(\Omega)$ such that 
\begin{eqnarray}
\label{Eq: con-dim PG method}
&a(u,v) = (f,v)_{\Omega}, \quad \forall v \in \mathfrak{B}^{\tau,\nu_1,\cdots,\nu_d}(\Omega),&
\end{eqnarray} 
where $a(u,v)$ is a continuous bilinear form and $f \in {\mathbb{B}}^{\tau,\nu_1,\cdots,\nu_d}(\Omega)$, which is the dual space of $\mathcal{{B}}^{\tau,\nu_1,\cdots,\nu_d}(\Omega)$. It should be noted that $(\prescript{}{0}{\mathcal{D}}_{t}^{2\tau}\, u, \, v )_{\Omega}=(\prescript{}{0}{\mathcal{D}}_{t}^{\tau}\, u, \prescript{}{t}{\mathcal{D}}_{T}^{\tau}\, v )_{\Omega}$ is proven in Lemma 4 in \cite{zhang2010galerkin} and later in \cite{kharazmi2017petrov} requiring less regularity and constraint. Therefore, We construct a Petrov-Galerkin spectral method for $u \in \mathcal{{B}}^{\tau,\nu_1,\cdots,\nu_d}(\Omega)$, satisfying the weak form of (\ref{Eq: Genral_FPDE}) as
\begin{eqnarray}
\label{Eq: general weak form}
\nonumber
(\prescript{}{0}{\mathcal{D}}_{t}^{\tau}\, u, \prescript{}{t}{\mathcal{D}}_{T}^{\tau}\, v )_{\Omega}  &+&
\sum_{i=1}^{d} \big{[}c_{l_i}  ( \prescript{}{a_i}{\mathcal{D}}_{x_i}^{\mu_i}\, u,\, \prescript{}{x_i}{\mathcal{D}}_{b_i}^{\mu_i}\, v )_{\Omega} 
+ c_{r_i}  ( \prescript{}{a_i}{\mathcal{D}}_{x_i}^{\mu_i}\, v,\, \prescript{}{x_i}{\mathcal{D}}_{b_i}^{\mu_i}\, u )_{\Omega}\big{]} 
\\
&-&\sum_{j=1}^{d} [k_{l_j}  ( \prescript{}{a_j}{\mathcal{D}}_{x_j}^{\nu_j}\, u,\, \prescript{}{x_j}{\mathcal{D}}_{b_j}^{\nu_j}\, v )_{\Omega}+k_{r_j}  ( \prescript{}{a_j}{\mathcal{D}}_{x_j}^{\nu_j}\, v,\, \prescript{}{x_j}{\mathcal{D}}_{b_j}^{\nu_j}\, u )_{\Omega}] 
\nonumber
\\
&+& \gamma 
(u,v)_{\Omega} = (f,v)_{\Omega}, \quad \forall v \in \mathfrak{B}^{\tau,\nu_1,\cdots,\nu_d}(\Omega), 
\end{eqnarray}
where $(\cdot,\cdot)_{\Omega}$ represents the usual $L^2$-product. 

Next, we choose proper subspaces of $\mathcal{B}^{\tau,\nu_1,\cdots,\nu_d}(\Omega)$ and $\mathfrak{B}^{\tau,\nu_1,\cdots,\nu_d}(\Omega)$ as finite dimensional $U_N$ and $V_N$ with $\dim(U_N) = \dim(V_N) =N$. Now, the discrete problem reads: find $u_N\in U_N$ such that
\begin{eqnarray}
\label{Eq: Finite-dim PG method}
&a(u_N,v_N) = (f,v_N), \quad \forall v_N \in V_N.&
\end{eqnarray} 
By representing $u_N$ as a linear combination of points/elements in $U_N$, i.e., the corresponding $(1+d)$-dimensional space-time basis functions, the finite-dimensional problem \eqref{Eq: Finite-dim PG method} leads to a linear system known as \textit{Lyapunov} system. For instance, when $d=1$,  we obtain the corresponding Lyapunov equation in the space-time domain $(0,T)\times(a_1,b_1)$ as 
\begin{eqnarray}
\label{Eq: total Lyapunov-2d1}
S_{\tau}\,\mathcal{U}\,M_1^T &+& c_{l_1} M_{\tau}\,\mathcal{U}\, S_{\mu_1,l}^T+ c_{r_1} M_{\tau}\,\mathcal{U}\, S_{\mu_1,r}^T 
\nonumber
\\
&-& \kappa_{l_1} M_{\tau}\,\mathcal{U}\, S_{\nu_1,l}^T- \kappa_{r_1} M_{\tau}\,\mathcal{U}\, S_{\nu_1,r}^T + \gamma M_{\tau}\,\mathcal{U}\,M_1^T =F, \quad
\end{eqnarray}
where all are defined in \ref{Sec: Implementaiton of PG}.
To find the general form of Lyapunov equation, we can define $S^{Tot}$ as
\begin{eqnarray}
\label{Eq: general Lyapunov-2d}
& -\kappa_{l_1} \,S_{\nu_1,l} - \kappa_{r_1} \,S_{\nu_1,r} + c_{l_1} \, S_{\mu_1,l} + c_{r_1} \, S_{\mu_1,r} =  S_1^{Tot}.
\end{eqnarray}
Considering equation (\ref{Eq: general Lyapunov-2d}), we obtain the (1+1)-D space-time Lyapunov system as
\begin{eqnarray}
\nonumber
S_{\tau}\,\mathcal{U}\,M_{1}^T + M_{\tau}\,\mathcal{U}\, S_1^{{Tot}^{T}}+ \gamma M_{\tau}\,\mathcal{U}\,M_1^T =F.
\end{eqnarray}
We present a new class of basis and test functions yielding \textit{symmetric} stiffness matrices. Moreover, we compute exactly the corresponding mass matrices, which are either \textit{symmetric} and \textit{pentadiagonal}.
In the following, we extensively study the properties of the aforementioned matrices, allowing us to formulate a general fast linear solver for \eqref{Eq: total Lyapunov-2d1}.

%
%
\subsection{\textbf{Space of Basis Functions ($U_N$)}}
\label{Sec: Basis Func TSFA}
We construct the basis for the spatial discretization employing the Legendre polynomials defined as 
\begin{eqnarray}
\label{Eq: SpatialBasis PG TSFA}
\phi^{}_{m} (\, \xi \,) =
\sigma_m \,\big{(} P_{m+1}^{} (\xi)\, -\, P_{m-1}^{} (\xi)\big{)},\quad m=1,2,\cdots \quad and \, \, \, \xi \in [-1,1], 
\end{eqnarray}
where $\sigma_m = 2 + (-1)^{m}$. 
The definition reflects the fact that for $\mu_j \leq 1/2$ and $1/2 \,\leq \nu_j \leq 1$, then both boundary conditions needs to be presented. Naturally, for the temporal basis functions only initial conditions are prescribed and the basis function for the temporal discretization is constructed based on the univariate poly-fractonomials \cite{zayernouri2013fractional} as
\begin{equation}
\label{Eq: TemporalBasis PG TSFA}
\psi^{\,\tau}_n(\eta) = {\sigma}_{n} (1+\eta)^{\tau}\,\, P_{n-1}^{-\tau\,,\, \tau} (\eta),\quad n=1,2,\cdots \quad and \, \, \, \eta\in [-1,1],
\end{equation}
for $n \geq 1$. With the notation established, we define the space-time trial space to be 
\begin{equation}
\label{Eq: Trial Space: PG}
U_N = 
span \Big\{    \Big( \psi^{\,\tau}_n \circ \eta \Big) ( t )
\prod_{j=1}^{d} \Big( \phi^{}_{m_j} \circ \xi_j\Big)  (x_j)\,
 : n = 1, \ldots, \mathcal{N}, \,m_j= 1, \ldots, \mathcal{M}_j\Big\},
\end{equation}
where $\eta(t) = 2t/T -1$ and $\xi_j(s) = 2\frac{s-a_j}{b_j-a_j} -1$. 

%
\subsection{\textbf{Space of Test Functions ($V_N$)}}
\label{Sec: Test Space PG TSFA}
%
We construct the \textit{spatial} test functions using  Legendre polynomial as well as the basis function in the Petrov-Galerkin method as 
\begin{eqnarray}
\label{Eq: SpatialTest PG TSFA}
\Phi^{}_{k} (\xi)
=
\widetilde{\sigma}_{k}\,\,\big{(} P_{k+1}^{} (\xi)\, -\, P_{k-1}^{} (\xi)\big{)},\quad k =1,2,\cdots \quad and \, \, \, \, \xi \in [-1,1],
\end{eqnarray}
where $ \widetilde{\sigma}_{k} = 2\,(-1)^{k} + 1$. Next, we define the \textit{temporal} test functions using the univariate poly-fractonomials
\begin{equation}
\label{Eq: TemporalTest PG TSFA}
\Psi^{\,\tau}_{r}(\eta) = \widetilde{\sigma}_{r} \,(1-\eta)^{\tau}\,\, P_{r-1}^{\tau\,,\, -\tau} (\eta),\quad r=1,2,\cdots \quad and \, \, \, \eta\in [-1,1],
\end{equation}
and we construct the corresponding space-time test space as
\begin{equation}
\label{Eq: Test Space: PG}
V_N = span \Big\{  \Big(\Psi^{\,\tau}_r \circ \eta\Big)(t)
\prod_{j=1}^{d} \Big( \Phi^{}_{k_j} \circ \xi_j\Big)(x_j)\,
 : r = 1, \ldots, \mathcal{N}, \,k_j= 1, \ldots, \mathcal{M}_j\Big\}.
\end{equation}
\vspace{0.01cm}
\begin{rem}
\label{Rem: Coeffs in Basis and Test Func TSFA}
The choices of $\sigma_{m}$ in \eqref{Eq: SpatialBasis PG TSFA} and \eqref{Eq: TemporalBasis PG TSFA}, also $\widetilde{\sigma}_{k}$ in \eqref{Eq: SpatialTest PG TSFA} and \eqref{Eq: TemporalTest PG TSFA}, result in the spatial/temporal mass and stiffness matrices being \textit{symmetric}, which are discussed in Theorems \ref{Thm: Temporal Stiffness},  \ref{Thm: Spatial Mass Matrices}, and \ref{Thm: Spatial Stiffness Matrix} in more details.
\end{rem}
\vspace{0.02cm}

%
%
\subsection{\textbf{Implementation of PG Spectral Method}}
\label{Sec: Implementaiton of PG}
%
%
We now seek the solution to \eqref{Eq: Genral_FPDE} in terms of a linear combination of elements in the space $U_N$ of the form
\begin{eqnarray}
\label{Eq: PG expansion}
&u_{N}(x,t) = 
\sum_{n=1}^\mathcal{N}
\,\,
\sum_{m_1=1}^{\mathcal{M}_1}
\cdots \sum_{m_d= 1}^{\mathcal{M}_d}
\hat u_{ n,m_1,\cdots,m_d} 
\Big[\psi^{\,\tau}_n(t)
\prod_{j=1}^{d} \phi^{}_{m_j}(x_j)
\Big]&
\end{eqnarray}
in $\Omega$. We enforce the corresponding residual 
\begin{eqnarray}
\label{Eq: Residual}
R_{N}(t,x_1,\cdots,x_d) &=& 
\prescript{}{0}{\mathcal{D}}_{t}^{2\tau} u_N   + \sum_{i=1}^{d} [c_{l_i}\prescript{}{a_i}{\mathcal{D}}_{x_i}^{2\mu_i} u_N +c_{r_i}\prescript{}{x_i}{\mathcal{D}}_{b_i}^{2\mu_i} u_N ]\,
\nonumber
\\
&-& \sum_{j=1}^{d} [\kappa_{l_j}\,\,\prescript{}{a_j}{\mathcal{D}}_{x_j}^{2\nu_j} u_N + \kappa_{r_j}\,\,\prescript{}{x_j}{\mathcal{D}}_{b_j}^{2\nu_j}u_N ]
+\gamma\, u_N -  f  \quad
\end{eqnarray}
to be $L^2$-orthogonal to $v_N \in V_N$, which leads to the finite-dimensional variational weak form in \eqref{Eq: Finite-dim PG method}. Specifically, by choosing $v_N = \Psi^{\,\tau}_r(t) \prod_{j=1}^{d} \Phi^{}_{k_j}(x_j)$, when $r = 1, \dots, \mathcal{N}$ and $k_j= 1, \dots, \mathcal{M}_j$, $j=1,2,\cdots,d$, we have 
\begin{eqnarray}
\label{Eq: general linear sys}
\sum_{n= 1}^\mathcal{N}
\sum_{m_1= 1}^{\mathcal{M}_1}
&\cdots &
\sum_{m_d= 1}^{\mathcal{M}_d}
\hat u_{ n,m_1,\cdots,m_d} 
\bigg( \{S_{\tau}\}_{r,n}\{M_1\}_{k_1,m_1} \cdots  \{M_d\}_{k_d,m_d}  
\nonumber
\\ 
\nonumber
&+&
 \sum_{i=1}^{d}
 [c_{l_i}
\{M_{\tau}\}_{r,n} \{M_1\}_{k_1,m_1} \cdots     
\{S_{\nu_i,l}\}_{k_i,m_i} 
\cdots 
\{M_d\}_{k_d,m_d}
\\ \nonumber
&+& c_{r_i}
\{M_{\tau}\}_{r,n} \{M_1\}_{k_1,m_1} \cdots     
\{S_{\nu_i,r}\}_{k_i,m_i} 
\cdots 
\{M_d\}_{k_d,m_d}] 
\\ \nonumber
&-& \sum_{j=1}^{d}
 \big{[}\kappa_{l_j}
\{M_{\tau}\}_{r,n}  \{ M_1\}_{k_1,m_1} \cdots     
\{S_{\nu_j,l}\}_{k_j,m_j}
\cdots 
\{M_d\}_{k_d,m_d}
\\ \nonumber
&+&\kappa_{r_j}
\{M_{\tau}\}_{r,n}  \{ M_1\}_{k_1,m_1} \cdots     
\{S_{\nu_j,r}\}_{k_j,m_j}
\cdots 
\{M_d\}_{k_d,m_d}\big{]} 
\\ \nonumber
&+&
\gamma \{M_{\tau}\}_{r,n} \{M_1\}_{k_1,m_1}  \cdots \{M_d\}_{k_d,m_d} \bigg)
\\ \nonumber
&=& F_{r,k_1,\cdots, k_d},
\end{eqnarray} 
where $S_{\tau}$ and $M_{\tau}$ denote, respectively, the temporal stiffness and mass matrices whose entries are defined as 
\begin{eqnarray}
\nonumber
\{S_{\tau}\}_{r,n}&=& 
\int_{0}^{T}
\prescript{}{0}{\mathcal{D}}_{t}^{\tau} \Big(\,\psi^{\tau}_n \circ \eta\Big)(t)
\prescript{}{t}{\mathcal{D}}_{T}^{\tau} \Big(\Psi^{\tau}_r \circ \eta \Big)(t)
\,
 dt,
\end{eqnarray}
and
\begin{eqnarray}
\nonumber
\{M_{\tau}\}_{r,n}= \int_{0}^{T} \,\Big(\Psi^{\tau}_r \circ \eta \Big)(t)
\Big(\,\psi^{\tau}_n \circ \eta\Big)(t)\,
 dt.
\end{eqnarray}
Moreover, $S_{\mu_j}$ and $M_{\mu_j}$, $j=1,2,\cdots,d$, are the corresponding spatial stiffness and mass matrices where the left-sided and right-sided entries of the spatial stiffness matrices are obtained as
\begin{eqnarray}
\nonumber
\{S_{\mu_j,l}\}_{k_j,m_j} =
\int_{a_j}^{b_j}
\prescript{}{a_j}{\mathcal{D}}_{x_j}^{\mu_j} \,
\Big( {\phi}^{}_{m_j} \circ \xi_j \Big)(x_j)\,
\prescript{}{x_j}{\mathcal{D}}_{b_j}^{\mu_j} 
\Big({\Phi}^{}_{k_j} \circ \xi_j\Big)(x_j)\,
dx_j = \{S_{\mu_j}\}_{k_j,m_j}, 
\\
\nonumber
\{S_{\mu_j,r}\}_{k_j,m_j} =
\int_{a_j}^{b_j}
\prescript{}{x_j}{\mathcal{D}}_{b_j}^{\mu_j} \,
\Big( {\phi}^{}_{m_j} \circ \xi_j \Big)(x_j)\,
\prescript{}{a_j}{\mathcal{D}}_{x_j}^{\mu_j} 
\Big({\Phi}^{}_{k_j} \circ \xi_j\Big)(x_j)\,
dx_j
= \{S_{\mu_j}\}_{k_j,m_j}^{T},
\end{eqnarray}
and the corresponding entries of the spatial mass matrix are given by
\begin{eqnarray}
\nonumber
\{M_j\}_{k_j,m_j} =
\int_{a_j}^{b_j}
\Big({\Phi}^{}_{k_j} \circ \xi_j\Big)(x_j)\,
\Big( {\phi}^{}_{m_j} \circ \xi_j \Big)(x_j)\,
dx_j. 
\end{eqnarray}
Moreover, the components of the load vector are computed as
\begin{eqnarray}
\label{Eq: general load matrix}
F_{r,k_1,\cdots, k_d} &=& \int_{\Omega}^{} f(t,x_1,\cdots,x_d) 
\Big(
\Psi^{\,\tau}_r \circ \eta \Big)(t)
\prod_{j=1}^{d} \Big(\Phi^{}_{k_j} \circ \xi_j\Big)(x_j)\, 
d\Omega. \quad 
\end{eqnarray} 
The linear system \eqref{Eq: general linear sys} can be exhibited as the following general Lyapunov equation 
\begin{eqnarray}
\label{Eq: general Lyapunov - 2}
&\Big(&
S_{\tau} \otimes M_1 \otimes M_2 \cdots \otimes M_d  
\\ \nonumber
&+&\sum_{i=1}^{d}
 c_{l_i}
M_{\tau} \otimes M_1\otimes \cdots  \otimes S_{\mu_{i},l} \otimes M_{i+1}  \cdots \otimes M_d  \\ \nonumber
&+& \sum_{i=1}^{d} c_{r_i}
M_{\tau} \otimes M_1\otimes \cdots   \otimes S_{\mu_{i},r} \otimes M_{i+1}  \cdots \otimes M_d
\\ 
\nonumber
&-&
 \sum_{j=1}^{d}
\kappa_{l_j}
M_{\tau} \otimes M_1\otimes \cdots  \otimes S_{\nu_{j},l} \otimes M_{j+1}  \cdots \otimes M_d  \\ \nonumber
&-&  \sum_{j=1}^{d} \kappa_{r_j}
M_{\tau} \otimes M_1
\otimes \cdots   \otimes S_{\nu_{j},r} \otimes M_{j+1}  \cdots \otimes M_d 
\\ \nonumber
&+&
\gamma\,\, M_{\tau}\otimes M_1 \otimes M_2 \cdots \otimes M_d \Big) \, \mathcal{U}= F.
\end{eqnarray} 
Let
\begin{equation}
c_{l_i} \times S_{\mu_{i},l} + c_{r_i} \times S_{\mu_{i},r}-\kappa_{l_i} \times S_{\nu_{i},l} -\kappa_{r_i} \times S_{\nu_{i},r}= S^{\, {Tot}}_{i}.
\end{equation} 
Considering the fact that all the aforementioned stiffness and mass matrices are \textit{symmetric}, $ S_{\mu_{i},l}$, $ S_{\mu_{i},r}$, $ S_{\nu_{i},l}$, and $ S_{\nu_{i},r}$ can be replaced by $S^{\, {Tot}}$ which remains symmetric. Therefore,
\begin{eqnarray}
\label{Eq: general Lyapunov}
&&\Big(
\,\,S_{\tau} \otimes M_1 \otimes M_2 \cdots \otimes M_d 
\\ \nonumber
&&+
 \sum_{i=1}^{d} 
 [\,\,
M_{\tau} \otimes M_1\otimes \cdots   \otimes M_{i-1} \otimes S_{i}^{\, {Tot}} \otimes M_{i+1}  \cdots \otimes M_d \,\,]
\\ \nonumber
&&+\gamma\,\, M_{\tau}\otimes M_1 \otimes M_2 \cdots \otimes M_d \Big) \, \mathcal{U}= F,
\end{eqnarray} 
in which $\otimes$ represents the Kronecker product, $F$ denotes the multi-dimensional load matrix whose entries are given in \eqref{Eq: general load matrix}, and $\mathcal{U}$ denotes the corresponding multi-dimensional matrix of unknown coefficients with entries $\hat u_{ n,m_1,\cdots,m_d} $. 

In the Theorems \ref{Thm: Temporal Stiffness}, \ref{Thm: Spatial Mass Matrices}, and \ref{Thm: Spatial Stiffness Matrix}, we study the properties of the aforementioned matrices. Besides, we present efficient ways of deriving the spatial mass and the temporal stiffness matrices analytically and exact computation of the temporal mass and the spatial stiffness matrices through proper quadrature rules. 

\begin{thm}
\label{Thm: Temporal Stiffness}
The temporal stiffness matrix $S_{\tau}$ corresponding to the time-fractional order $\tau \in (0,1)$ is a diagonal $\mathcal{N}\times \mathcal{N}$ matrix, whose entries are obtained as
\begin{eqnarray}
\nonumber
\{S_{\tau}\}_{r,n}= \widetilde{\sigma}_r\,\sigma_n \frac{\Gamma(n+\tau)}{\Gamma(n)} \, \frac{\Gamma(r+\tau)}{\Gamma(r)}
\Big(\frac{2}{T} \Big)^{2\tau-1}\, \frac{2}{2n-1} \, \delta_{r,n}, \quad r,n=1,2,\cdots, \,\mathcal{N}.
\end{eqnarray}
Moreover, the entries of temporal mass matrices $M_{\tau}$ can be computed exactly by employing a Gauss-Lobatto-Jacobi (GLJ) rule with respect to the weight function $(1-\eta)^{\tau}(1+\eta)^{\tau}$, $\eta \in [-1,1]$, where $\alpha=\tau/2$. Moreover, $M_{\tau}$ is symmetric.
\begin{proof}
See \cite{zayernouri2015unified}.
\end{proof}
\end{thm}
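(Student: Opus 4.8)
The plan is to evaluate the defining integral
\[
\{S_\tau\}_{r,n}=\int_0^T \prescript{}{0}{\mathcal{D}}_t^{\tau}\big(\psi_n^{\tau}\circ\eta\big)(t)\,\prescript{}{t}{\mathcal{D}}_T^{\tau}\big(\Psi_r^{\tau}\circ\eta\big)(t)\,dt
\]
by pulling everything back to the reference interval $[-1,1]$ through the affine map $\eta=2t/T-1$, and then exploiting a fractional-derivative identity that collapses both poly-fractonomials to ordinary Legendre polynomials. First I would record how the Riemann--Liouville operators scale under this map. Writing $t=\tfrac{T}{2}(\eta+1)$ and changing the integration variable inside the Riemann--Liouville integral, one obtains $\prescript{}{0}{\mathcal{D}}_t^{\tau}u(t)=\big(\tfrac{2}{T}\big)^{\tau}\,\prescript{}{-1}{\mathcal{D}}_\eta^{\tau}\tilde u(\eta)$ and, symmetrically, $\prescript{}{t}{\mathcal{D}}_T^{\tau}v(t)=\big(\tfrac{2}{T}\big)^{\tau}\,\prescript{}{\eta}{\mathcal{D}}_1^{\tau}\tilde v(\eta)$, where a tilde denotes the pullback. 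Together with $dt=\tfrac{T}{2}\,d\eta$ this turns the prefactor into $\big(\tfrac{2}{T}\big)^{2\tau}\cdot\tfrac{T}{2}=\big(\tfrac{2}{T}\big)^{2\tau-1}$, already matching the claimed power of $2/T$.

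Next I would establish the two fractional-derivative identities. Applying \eqref{6} with $\alpha=\beta=0$ and order $\nu=\tau$ to the polynomial $P_{n-1}$ gives $\prescript{RL}{-1}{\mathcal{I}}_\eta^{\tau}P_{n-1}(\eta)=\tfrac{\Gamma(n)}{\Gamma(n+\tau)}(1+\eta)^{\tau}P_{n-1}^{-\tau,\tau}(\eta)$; inverting this relation (using $\prescript{}{-1}{\mathcal{D}}_\eta^{\tau}\prescript{RL}{-1}{\mathcal{I}}_\eta^{\tau}=\mathrm{Id}$) yields
\[
\prescript{}{-1}{\mathcal{D}}_\eta^{\tau}\big\{(1+\eta)^{\tau}P_{n-1}^{-\tau,\tau}(\eta)\big\}=\frac{\Gamma(n+\tau)}{\Gamma(n)}\,P_{n-1}(\eta),
\]
so that $\prescript{}{-1}{\mathcal{D}}_\eta^{\tau}\psi_n^{\tau}=\sigma_n\tfrac{\Gamma(n+\tau)}{\Gamma(n)}P_{n-1}$. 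The mirror computation, now from \eqref{7} with $\alpha=\beta=0$ and $\nu=\tau$, gives $\prescript{}{\eta}{\mathcal{D}}_1^{\tau}\Psi_r^{\tau}=\widetilde\sigma_r\tfrac{\Gamma(r+\tau)}{\Gamma(r)}P_{r-1}$. This is the crucial step: the temporal poly-fractonomial bases are designed precisely so that their $\tau$-th fractional derivatives are Legendre polynomials.

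Substituting the two identities into the pulled-back integral leaves
\[
\{S_\tau\}_{r,n}=\Big(\tfrac{2}{T}\Big)^{2\tau-1}\widetilde\sigma_r\,\sigma_n\,\frac{\Gamma(r+\tau)}{\Gamma(r)}\frac{\Gamma(n+\tau)}{\Gamma(n)}\int_{-1}^{1}P_{r-1}(\eta)\,P_{n-1}(\eta)\,d\eta,
\]
and the $L^2(-1,1)$-orthogonality of Legendre polynomials, $\int_{-1}^{1}P_{r-1}P_{n-1}\,d\eta=\tfrac{2}{2n-1}\delta_{r,n}$, immediately produces both the diagonality and the exact entries asserted in the statement.

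For the mass matrix, pulling back $\{M_\tau\}_{r,n}$ in the same way shows its integrand equals $\tfrac{T}{2}\widetilde\sigma_r\sigma_n$ times the Jacobi weight $(1-\eta)^{\tau}(1+\eta)^{\tau}$ multiplied by $P_{r-1}^{\tau,-\tau}(\eta)P_{n-1}^{-\tau,\tau}(\eta)$, a polynomial of degree $r+n-2$; hence a Gauss--Lobatto--Jacobi rule against that weight with sufficiently many nodes integrates it exactly. Symmetry follows by applying the reflection identity $P_k^{\alpha,\beta}(-\eta)=(-1)^kP_k^{\beta,\alpha}(\eta)$ to both factors and tracking the sign structure of $\sigma_n,\widetilde\sigma_r$. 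I expect the delicate points to be the two bookkeeping steps rather than any deep idea: getting the $(2/T)^{\tau}$ scaling of the left and right Riemann--Liouville derivatives right under the affine change of variables, and verifying the sign cancellations that make $M_\tau$ symmetric; the reduction to Legendre orthogonality is then routine.
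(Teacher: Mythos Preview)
Your proposal is correct and is precisely the standard argument. Note, however, that the paper itself does not prove this theorem at all: its entire ``proof'' is the single line ``See \cite{zayernouri2015unified}.'' What you have written is essentially the derivation contained in that reference --- the affine pullback producing the $(2/T)^{2\tau-1}$ factor, the reduction of $\prescript{}{-1}{\mathcal{D}}_\eta^{\tau}\psi_n^{\tau}$ and $\prescript{}{\eta}{\mathcal{D}}_1^{\tau}\Psi_r^{\tau}$ to Legendre polynomials via \eqref{6}--\eqref{7}, and Legendre $L^2$-orthogonality for the diagonality --- together with the GLJ-quadrature and reflection-identity arguments for $M_\tau$. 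So there is nothing to compare: you have supplied a complete proof where the paper only points to one.
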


\begin{thm}
\label{Thm: Spatial Mass Matrices}
The spatial mass matrix $M$ is a penta-diagonal $\mathcal{M}\times \mathcal{M}$ matrix, whose entries are explicitly given as
\begin{eqnarray}
\quad {M}_{k,r}&=&{\widetilde{\sigma}}_k\,{\sigma}_r\, \Big{[}\frac{2}{2k+3}\delta_{k,r}-\frac{2}{2k+3}\delta_{k+1,r-1}
-\frac{2}{2k-3}\delta_{k-1,r+1}+\frac{2}{2k-3}\delta_{k-1,r-1} \Big{]}. \quad \quad
\end{eqnarray}

\begin{proof}
The (k, r)\textit{th}-entry of the spatial mass matrix is given by
\begin{eqnarray}
\label{Eq: Spatial Mass_1}
&{M}_{k,r}= 
\int_{a}^{b}
\,\phi^{}_r \circ \xi ( x ) \, \Phi^{}_k \circ \xi ( x ) 
\,
 d{x} = \big{(} \frac{b-a}{2}  \big{)} \, \int_{-1}^{1}
\,\phi^{}_r  ( \xi ) \Phi^{}_k ( \xi ) 
\,
 d{\xi} ,&
\end{eqnarray}
where $\xi = 2 \frac{x-a}{b-a}-1$ and $\xi \in (-1, \, 1)$. Substituting the spatial basis/test functions, we have 
\begin{eqnarray}
\label{Eq: total Lyapunov-2d-2}
&{M}_{k,r}= 
\big{(} \frac{b-a}{2}  \big{)} \,{\widetilde{\sigma}}_k\,{\sigma}_r\,\big[\,\widetilde{M}_{k,r}^{}-\widetilde{M}_{k+1,r-1}^{}-\widetilde{M}_{k-1,r+1}^{}+\widetilde{M}_{k,r}^{}  \big], &
\end{eqnarray}
in which  
\begin{eqnarray}
\widetilde{M}_{i,j}^{} = \,\int_{-1}^{1}
\, P_{i} (\xi) \, P_{j} (\xi)  \,
 d{\xi}&& = \frac{2}{2i+1}\delta_{ij}.
\end{eqnarray}
Therefore, we have
\begin{eqnarray}
\nonumber
{M}_{k,r}&=& \big{(} \frac{b-a}{2}  \big{)} \,{\widetilde{\sigma}}_k\,{\sigma}_r\, \Big[\frac{2}{2k+3}\delta_{k,r}-\frac{2}{2k+3}\delta_{k+1,r-1}
-\frac{2}{2k-3}\delta_{k-1,r+1}+\frac{2}{2k-3}\delta_{k,r} \Big]
\end{eqnarray}
as a pentadiagonal matrix. Moreover,
\begin{eqnarray}
{M}_{r,k}&=& \big{(} \frac{b-a}{2}  \big{)} \,{\widetilde{\sigma}}_r\,{\sigma}_k\, \Big[\frac{2}{2r+3}\delta_{r,k}-\frac{2}{2r+3}\delta_{r+1,k-1}
-\frac{2}{2r-3}\delta_{r-1,k+1}+\frac{2}{2r-3}\delta_{r,k} \Big] \nonumber
\\
\nonumber
&=& {M}_{k,r}.
\end{eqnarray}
\end{proof}
\end{thm}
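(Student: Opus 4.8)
The plan is to reduce the computation to the reference interval $[-1,1]$ and then exploit nothing more than the $L^2$-orthogonality of the Legendre polynomials. First I would apply the affine map $\xi = 2(x-a)/(b-a)-1$ in the defining integral for ${M}_{k,r}$; this produces the Jacobian factor $(b-a)/2$ and reduces the problem to evaluating $\int_{-1}^{1}\phi_r(\xi)\,\Phi_k(\xi)\,d\xi$. Substituting the explicit definitions \eqref{Eq: SpatialBasis PG TSFA} and \eqref{Eq: SpatialTest PG TSFA}, the normalization constants $\sigma_r$ and $\widetilde{\sigma}_k$ factor out of the integral, leaving the product $\big(P_{r+1}-P_{r-1}\big)\big(P_{k+1}-P_{k-1}\big)$ to be integrated over $[-1,1]$.

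Next I would expand this product into its four constituent terms and integrate each one using the orthogonality relation $\int_{-1}^{1}P_i(\xi)P_j(\xi)\,d\xi=\tfrac{2}{2i+1}\delta_{ij}$. The two ``diagonal'' products $P_{k+1}P_{r+1}$ and $P_{k-1}P_{r-1}$ both contribute to the entry $r=k$, carrying the normalizations $\tfrac{2}{2(k+1)+1}$ and $\tfrac{2}{2(k-1)+1}$ respectively, while the cross products $P_{k+1}P_{r-1}$ and $P_{k-1}P_{r+1}$ survive only when $r=k+2$ and $r=k-2$. Collecting the four Kronecker deltas then yields the claimed banded structure: the nonzero entries are confined to the main diagonal and the second super/sub-diagonals, which is the pentadiagonal pattern asserted in the statement.

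The step I expect to be the genuine content, rather than routine bookkeeping, is the \emph{symmetry} of $M$, because a Petrov--Galerkin pairing with distinct trial and test functions has no a priori reason to produce a symmetric matrix. The clean way to see it is to observe that both normalization sequences depend only on the parity of the index, namely $\sigma_m=2+(-1)^m$ and $\widetilde{\sigma}_m=2(-1)^m+1$, so that $\sigma_{k+2}=\sigma_k$ and $\widetilde{\sigma}_{k+2}=\widetilde{\sigma}_k$. Since the bare integral $\int_{-1}^{1}(P_{r+1}-P_{r-1})(P_{k+1}-P_{k-1})\,d\xi$ is manifestly invariant under $k\leftrightarrow r$, and its nonzero entries couple only indices of equal parity ($|k-r|\in\{0,2\}$), the prefactor $\widetilde{\sigma}_k\sigma_r$ agrees with $\widetilde{\sigma}_r\sigma_k$ exactly on the support of those entries. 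Concretely, I would recompute ${M}_{r,k}$, use the two parity identities on the $r=k\pm2$ bands, and match it term by term with ${M}_{k,r}$, thereby establishing $M=M^{T}$.
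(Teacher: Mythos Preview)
Your proposal is correct and follows essentially the same route as the paper: affine reduction to $[-1,1]$, expansion of the product $(P_{r+1}-P_{r-1})(P_{k+1}-P_{k-1})$, and termwise use of Legendre orthogonality to read off the pentadiagonal structure. Your parity-based justification of $M=M^{T}$ is in fact more explicit than the paper's own argument, which simply writes out $M_{r,k}$ and asserts equality with $M_{k,r}$ without spelling out why $\widetilde{\sigma}_k\sigma_r=\widetilde{\sigma}_r\sigma_k$ on the nonzero bands.
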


%
\begin{thm}
\label{Thm: Spatial Stiffness Matrix}
The total spatial stiffness matrix $S^{Tot}_{i}$ is symmetric and its entries can be exactly computed as:
%
\begin{eqnarray}
\label{Eq: total Lyapunov-2d}
c_{l_i} \times S_{\mu_{i},l} + c_{r_i} \times S_{\mu_{i},r}-\kappa_{l_i} \times S_{\nu_{i},l}-\kappa_{r_i} \times S_{\nu_{i},r}= S^{\, {Tot}}_{i}.
\end{eqnarray}
where $i=1,2,\cdots,d$.

\begin{proof}
Regarding the definition of stiffness matrix, we have
\begin{eqnarray}
 \{ S_{\mu_i,l}\}_{r,n}
&=& 
\int_{a_i}^{b_i}
\prescript{}{a_i}{\mathcal{D}}_{x_i}^{\mu_i} \Big(\,\phi^{}_n  ( x_i ) \Big)
\prescript{}{x_i}{\mathcal{D}}_{b_i}^{\mu_i} \Big(\Phi^{}_r ( x_i ) \Big)
\,
 d{x_i},
\nonumber
\\
&= &
\big{(} \frac{b_i-a_i}{2}  \big{)} ^{-2\mu_i+1} \,{\widetilde{\sigma}}_r\,\sigma_n\, \int_{-1}^{1} \prescript{}{-1}{\mathcal{D}}_{\xi_i}^{\mu_i} \Big(\, P_{n+1} - P_{n-1} \Big)
\prescript{}{{\xi}_i}{\mathcal{D}}_{1}^{\mu_i} \Big(\, P_{k+1} - P_{k-1} \Big)
\,
 d{\xi_i}
\nonumber
\\
&=&
\big{(} \frac{b_i-a_i}{2}  \big{)} ^{-2\mu_i+1} \,{\widetilde{\sigma}}_r\,\sigma_n\, \Big[\,\widetilde{S}_{r+1,n+1}^{\, \mu_i}-\widetilde{S}_{r+1,n-1}^{\, \mu_i}-\widetilde{S}_{r-1,n+1}^{\, \mu_i}+\widetilde{S}_{r-1,n-1}^{\, \mu_i}  \Big],   \quad \quad
\end{eqnarray}
where
\begin{eqnarray}
&\widetilde{S}_{r,n}^{\, \mu_i} = \,\int_{-1}^{1}
\prescript{}{-1}{\mathcal{D}}_{\xi_i}^{\mu_i} \Big(\, P_{n} (\xi_i) \Big)
\prescript{}{\xi_i}{\mathcal{D}}_{1}^{\mu_i} \Big(\, P_{r} (\xi_i) \Big)
\,
 d{\xi_i}&
 \nonumber 
 \\
&=
\int_{-1}^{1}\, \frac{\Gamma(r+1)}{\Gamma(r-\mu_i+1)}\,  \frac{\Gamma(n+1)}{\Gamma(n-\mu_i+1)} \, {(1+\xi_i)}^{-\mu_i} {(1-\xi_i)}^{-\mu_i} \, P^{ -\mu_i,\mu_i}_{r} \, P^{ \mu_i,-\mu_i}_{n} d{\xi_i}.&
\nonumber
\end{eqnarray}
$\widetilde{S}_{r,n}^{\, \mu_i}$ can be computed accurately using Guass-Jacobi quadrature rule as
\begin{eqnarray}
 \widetilde{S}_{r,n}^{\, \mu_i}
&=&
\,\, \frac{\Gamma(r+1)}{\Gamma(r-\mu_i+1)}\,  \frac{\Gamma(n+1)}{\Gamma(n-\mu_i+1)}
 \sum_{q=1}^{Q} w_q \,\,
P_{r}^{-\mu_i,\, \mu_i} (\xi_q)
P_{n}^{\mu_i,\, -\mu_i} (\xi_q), 
\end{eqnarray}
in which $\mathcal{Q} \ge \mathcal{N} +2 $ represents the minimum number of GJ quadrature points $\{\xi_q\}_{q=1}^{\mathcal{Q}}$, associated with the weigh function $(1-\xi)^{-\mu_i} (1+\xi)^{-\mu_i}$, for \textit{exact} quadrature, and $\{w_q\}_{q=1}^{Q}$ are the corresponding quadrature weights. 
Employing the property of the Jacobi polynomials where $P^{\alpha, \beta}_n(-x_i) = (-1)^n P^{ \beta,\alpha}_n(x_i)$, we can re-express $\widetilde{S}_{r,n}^{\, \mu_i}$ as $(-1)^{(r+n)}\,\widetilde{S}_{n,r}^{\, \mu_i}$. Accordingly,
\begin{eqnarray}
\label{Eq: Spatial stiffness_2}
\{S_{ \mu_i}\}_{r,n}
\nonumber
&=&
\big{(} \frac{b_i-a_i}{2}  \big{)} ^{-2\mu_i+1} \,{\widetilde{\sigma}}_r\,\sigma_n\, \Big[(-1)^{(n+r+2)}\,\widetilde{S}_{n+1,r+1}^{\, \mu_i}-(-1)^{(n+r)}\,\widetilde{S}_{n+1,r-1}^{\, \mu_i}
\\
\nonumber
&&-(-1)^{(n+r)}\,\widetilde{S}_{n-1,r+1}^{\, \mu_i}+(-1)^{(n+r-2)}\,\widetilde{S}_{n-1,r-1}^{\, \mu_i}  \Big]
\nonumber
\\
&=&
{\widetilde{\sigma}}_r\,\sigma_n\, (-1)^{(n+r)}\,\Big[\widetilde{S}_{n+1,r+1}^{\, \mu_i}-\widetilde{S}_{n+1,r-1}^{\, \mu_i}-\widetilde{S}_{n-1,r+1}^{\, \mu_i}+\widetilde{S}_{n-1,r-1}^{\, \mu_i}  \Big].
\end{eqnarray} 
According to \eqref{Eq: Spatial stiffness_2}, 
\begin{eqnarray}
\label{Eq: Spatial stiffness_3}
\{S_{\, \mu_i}\}_{r,n} =  \{S_{\, \mu_i}\}_{n,r}\, \times \, \frac{{\widetilde{\sigma}}_r\,\sigma_n\,}{{\widetilde{\sigma}}_n\,\sigma_r\,} (-1)^{(n+r)}.
\end{eqnarray}
In fact, ${\widetilde{\sigma}}_r$ and $\sigma_n$ are chosen such that $(-1)^{(n+r)}$ is canceled. Furthermore,
\begin{eqnarray}
\label{eq:symm_111}
\{ S_{\mu_i,r}\}_{r,n}
&=& 
\int_{a_i}^{b_i}
\prescript{}{a_i}{\mathcal{D}}_{x_i}^{\mu_i} \Big(\,\Phi^{}_r ( x_i ) \Big)
\prescript{}{x_i}{\mathcal{D}}_{b_i}^{\mu_i} \Big(\phi^{}_n ( x_i ) \Big)
\,
d{x_i},
\nonumber
\\
&=&\int_{a_i}^{b_i}
\prescript{}{a_i}{\mathcal{D}}_{x_i}^{\mu_i} \Big(\,\phi^{}_n  ( x_i ) \Big)
\prescript{}{x_i}{\mathcal{D}}_{b_i}^{\mu_i} \Big(\Phi^{}_r ( x_i ) \Big)
\,
d{x_i},
\nonumber
\\
&= &
\{ S_{\mu_i,l}\}_{n,r}, 
\end{eqnarray}
where $\{ S_{\mu_i,l}\}_{n,r}=\{ S_{\mu_i,l}\}_{r,n}=\{ S_{\mu_i,r}\}_{r,n}=\{ S_{\mu_i}\}_{r,n}$ due to symmetry of $ S_{\mu_i,l}$ and $ S_{\mu_i,r}$. Similar to \eqref{eq:symm_111}, we get $\{ S_{\nu_i,l}\}_{r,n}=\{ S_{\nu_i,r}\}_{r,n}=\{ S_{\nu_i}\}_{r,n} $; therefore,
\begin{eqnarray}
&-(\kappa_{l_i}+\kappa_{r_i}) \,S_{\nu_i} + (c_{l_i}+c_{r_i}) \, S_{\mu_i} =  S_i^{Tot}.
\end{eqnarray}
 Hence it can be easily concluded that the stiffness matrix $S^{\, \mu_i}_{n,r}$, $S^{\, \nu_i}_{n,r}$ and thereby $\{S^{Tot}_{i}\}_{n,r}$ as the sum of two symmetric matrices are symmetric.
\end{proof}
\end{thm}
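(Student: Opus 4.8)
The plan is to collapse every one of the four one-sided stiffness matrices onto a single family of Jacobi integrals on the reference element, establish the exact quadrature and the symmetry of that family, and then show that the prescribed normalizations $\sigma_m$ and $\widetilde\sigma_k$ are engineered precisely to turn each one-sided matrix into a symmetric one. First I would treat the left-sided order-$\mu_i$ block. Starting from $\{S_{\mu_i,l}\}_{r,n}=\int_{a_i}^{b_i}\prescript{}{a_i}{\mathcal D}_{x_i}^{\mu_i}\phi_n\,\prescript{}{x_i}{\mathcal D}_{b_i}^{\mu_i}\Phi_r\,dx_i$, I perform the affine map $\xi_i=2\frac{x_i-a_i}{b_i-a_i}-1$ onto $[-1,1]$ (as in the proof of Theorem~\ref{Thm: Spatial Mass Matrices}), which extracts the factor $\big(\frac{b_i-a_i}{2}\big)^{-2\mu_i+1}$ and reduces the one-sided derivatives of $\phi_n$ and $\Phi_r$ to those of the Legendre blocks $P_{n\pm1}$ and $P_{r\pm1}$. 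Substituting the closed forms \eqref{Eq: 10}--\eqref{Eq: 11} for $\prescript{}{-1}{\mathcal D}_{\xi_i}^{\mu_i}P_m$ and $\prescript{}{\xi_i}{\mathcal D}_{1}^{\mu_i}P_m$, each entry becomes $\widetilde\sigma_r\sigma_n$ times a four-term combination of $\widetilde S^{\mu_i}_{r\pm1,n\pm1}$, where $\widetilde S^{\mu_i}_{r,n}$ is the Jacobi integral carrying the weight $(1-\xi_i)^{-\mu_i}(1+\xi_i)^{-\mu_i}$ against $P_r^{-\mu_i,\mu_i}P_n^{\mu_i,-\mu_i}$ (up to $\Gamma$-prefactors). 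Since the remaining integrand is a polynomial tested against this Jacobi weight, a Gauss--Jacobi rule with $\mathcal Q\ge\mathcal N+2$ nodes evaluates $\widetilde S^{\mu_i}_{r,n}$ exactly, which furnishes the ``exactly computed'' assertion.

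The symmetry is the heart of the matter. Using the parity identity $P^{\alpha,\beta}_n(-\xi_i)=(-1)^n P^{\beta,\alpha}_n(\xi_i)$ together with the invariance of the weight under $\xi_i\mapsto-\xi_i$, I would prove $\widetilde S^{\mu_i}_{r,n}=(-1)^{r+n}\widetilde S^{\mu_i}_{n,r}$. Feeding this into the four-term combination and swapping $r\leftrightarrow n$ yields a relation of the exact shape $\{S_{\mu_i}\}_{r,n}=\{S_{\mu_i}\}_{n,r}\cdot\frac{\widetilde\sigma_r\,\sigma_n}{\widetilde\sigma_n\,\sigma_r}(-1)^{n+r}$, so symmetry is equivalent to the scalar identity $\frac{\widetilde\sigma_r\,\sigma_n}{\widetilde\sigma_n\,\sigma_r}(-1)^{n+r}=1$ for all $r,n$. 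The decisive observation is that the chosen constants satisfy $\widetilde\sigma_k=(-1)^k\sigma_k$ for every $k$ (directly from $\sigma_m=2+(-1)^m$ and $\widetilde\sigma_k=2(-1)^k+1$), whence $\frac{\widetilde\sigma_r}{\sigma_r}=(-1)^r$ and $\frac{\sigma_n}{\widetilde\sigma_n}=(-1)^n$, so the accumulated sign $(-1)^{n+r}$ is cancelled identically; this is exactly the design principle announced in Remark~\ref{Rem: Coeffs in Basis and Test Func TSFA}. Hence $S_{\mu_i,l}=:S_{\mu_i}$ is symmetric.

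For the right-sided block I would note that, by commutativity of the $L^2$ pairing and the construction recorded in Section~\ref{Sec: Implementaiton of PG}, $S_{\mu_i,r}$ is the transpose $S_{\mu_i,l}^{T}$; combined with the symmetry of $S_{\mu_i,l}$ just established this collapses to $S_{\mu_i,r}=S_{\mu_i,l}=S_{\mu_i}$. The identical argument with $\mu_i$ replaced by $\nu_i$ yields $S_{\nu_i,l}=S_{\nu_i,r}=:S_{\nu_i}$, both symmetric. Consequently $S^{Tot}_i=(c_{l_i}+c_{r_i})S_{\mu_i}-(\kappa_{l_i}+\kappa_{r_i})S_{\nu_i}$ is a real linear combination of symmetric matrices and is therefore symmetric, while each summand is exactly computable by the Gauss--Jacobi rule above. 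The step I expect to be the main obstacle is the sign bookkeeping in the second paragraph: one must carry the four index shifts $r\pm1,n\pm1$ and their attached $\pm$ signs through the parity identity without error and confirm that $\widetilde\sigma_k=(-1)^k\sigma_k$ holds for both parities of $k$, since it is precisely this cancellation---rather than any vanishing of the individual Jacobi integrals---that produces symmetry.
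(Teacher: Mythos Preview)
Your proposal is correct and follows essentially the same route as the paper: the affine pullback to $[-1,1]$, the reduction to the four-term combination of Jacobi integrals $\widetilde S^{\mu_i}_{r\pm1,n\pm1}$, exact evaluation by Gauss--Jacobi quadrature, the parity identity yielding $\widetilde S^{\mu_i}_{r,n}=(-1)^{r+n}\widetilde S^{\mu_i}_{n,r}$, and the transpose relation $S_{\mu_i,r}=S_{\mu_i,l}^{T}$ are exactly the paper's steps. Your explicit check that $\widetilde\sigma_k=(-1)^k\sigma_k$ (hence $\frac{\widetilde\sigma_r\sigma_n}{\widetilde\sigma_n\sigma_r}(-1)^{n+r}=1$) is in fact a slight improvement over the paper, which merely asserts that the normalizations are ``chosen such that $(-1)^{(n+r)}$ is canceled'' without writing out the verification.
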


%
%
\section{\textbf{Unified Fast FPDE Solver}}
\label{Sec: FastSolver FPDE}
%
%
We formulate a closed-form solution for the Lyapunov system \eqref{Eq: general Lyapunov} in terms of the generalised eigensolutions that can be computed very efficiently, leading to the following unified fast solver for the development of Petrov-Galerkin spectral method.

\begin{thm}
\label{Thm: fast solver}
Let $\{ {\vec{e}_{}}^{\, j}    ,    \lambda^{\, j}\,  \}_{m_j=1}^{\mathcal{M}_j}$ be the set of general eigen-solutions of the spatial stiffness matrix $S^{Tot}_j$ with respect to the mass matrix $M_{j}$. Moreover, let $\{ {\vec{e}_{}}^{\,\,\tau}    ,    \lambda^{\tau}_{}\,  \}_{n=1}^{\mathcal{N}}$ be the set of general eigen-solutions of the temporal mass matrix $M_{\tau}$ with respect to the stiffness matrix $S_{\tau}$. 

(I) if $d>1$, then the multi-dimensional matrix of unknown coefficients $\mathcal{U}$ is explicitly obtained as
\begin{equation}
\label{Eq: thm u expression in terms of k}
\mathcal{U} = 
\sum_{n=1}^{\mathcal{N}}
\,\,
\sum_{m_1= 1}^{\mathcal{M}_1}
\cdots 
\sum_{m_d= 1}^{\mathcal{M}_d}
\kappa_{ n,m_1,\cdots,\,m_d  } \,
\,\vec{e}_n^{\,\,\tau}\,
\otimes
\,{\vec{e}_{m_1}}^{\, 1}\,\,
\otimes
\cdots
\otimes
\,{\vec{e}_{m_d}}^{\, d},
\end{equation}
where $\kappa_{ n,m_1,\cdots,\,m_d }$ are given by 
\begin{eqnarray}
\label{Eq: thm k fraction_1}
\kappa_{ n,m_1,\cdots,\,m_d  } =  \frac{(\,\vec{e}_n^{\,\,\tau}
\,{\vec{e}_{m_1}}^{\, 1}
\cdots
\,{\vec{e}_{m_d}}^{\, d}) F}
{
\Big[
(\vec{e}_n^{\,\,\tau^T} S_{\tau} \vec{e}_n^{\,\,\tau})
\prod_{j=1}^{d} ((\vec{e}_{m_j}^{j})^{T}   M_{j} {\vec{e}_{m_j}}^{})
\Big]
\Lambda_{n,m_1,\cdots,m_d}
},
\end{eqnarray}
in which the numerator represents the standard multi-dimensional inner product, and $\Lambda_{n,m_1,\cdots,m_d}$ are obtained in terms of the eigenvalues of all mass matrices as
\begin{eqnarray}
  \nonumber
&\Lambda_{n,m_1,\cdots, m_d} = \Big[
(1+\gamma\,\, 
\lambda^{\tau}_n)
+
\lambda^{\tau}_n
 \sum_{j=1}^{d}
(
\lambda^{j}_{m_j}
)
\Big].  &
\end{eqnarray}
(II) If $d=1$, then the two-dimensional matrix of the unknown solution $\mathcal{U}$ is obtained as
\begin{equation}
\nonumber
\mathcal{U} = 
\sum_{n=1}^{\mathcal{N}}
\,\,
\sum_{m_1= 1}^{\mathcal{M}_1}
\kappa_{ n,m_1 } \,
\,\vec{e}_n^{\,\,\tau}\,
\,({\vec{e}_{m_1}}^{\, 1})^{T},
\end{equation}
where $\kappa_{n,m_1}$ is explicitly obtained as 
\begin{eqnarray}
\nonumber
\kappa_{ n,m_1  } = 
 \frac{
\vec{e}_n^{\,\,\tau^T}
F
\,{\vec{e}_{m_1}}^{\, 1}
}
{
(\vec{e}_n^{\,\,\tau^T} S_{\tau} \vec{e}_n^{\,\,\tau})
(({\vec{e}_{m_1}}^{\, 1})^{T}   M_{1} {\vec{e}_{m_1}}^{\, 1})
\Big[
(1+\gamma\,\, 
\lambda^{\tau}_n)
+\,
\lambda^{\tau}_n
\,
\lambda^{\, 1}_{m_1}
\Big]
}.
\end{eqnarray}
\end{thm}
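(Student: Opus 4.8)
The plan is to diagonalize the generalized Lyapunov operator in \eqref{Eq: general Lyapunov} simultaneously in every coordinate direction by means of the generalized eigenbases, and then read off the coefficients $\kappa_{n,m_1,\cdots,m_d}$ one at a time. Write the spatial and temporal generalized eigen-relations as $S^{Tot}_j\,\vec{e}_{m_j}^{\,j} = \lambda_{m_j}^{j}\,M_j\,\vec{e}_{m_j}^{\,j}$ and $M_\tau\,\vec{e}_n^{\,\tau} = \lambda_n^{\tau}\,S_\tau\,\vec{e}_n^{\,\tau}$. Because Theorems \ref{Thm: Temporal Stiffness}, \ref{Thm: Spatial Mass Matrices}, and \ref{Thm: Spatial Stiffness Matrix} guarantee that $S_\tau$, $M_\tau$, $M_j$, and $S^{Tot}_j$ are all symmetric, each of these pencils is symmetric; reducing it to a standard symmetric eigenproblem (via a square-root factorization of the relevant weight matrix) yields real eigenvalues together with a complete set of eigenvectors that are \emph{bi-orthogonal}, i.e. $(\vec{e}_{m_j'}^{\,j})^T M_j \vec{e}_{m_j}^{\,j} = (\vec{e}_{m_j'}^{\,j})^T S^{Tot}_j \vec{e}_{m_j}^{\,j} = 0$ for $m_j' \neq m_j$, and likewise $(\vec{e}_{n'}^{\,\tau})^T S_\tau \vec{e}_n^{\,\tau} = (\vec{e}_{n'}^{\,\tau})^T M_\tau \vec{e}_n^{\,\tau} = 0$ for $n' \neq n$. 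I would therefore seek $\mathcal{U}$ in the tensor-product eigenbasis, exactly as in \eqref{Eq: thm u expression in terms of k}, with the scalars $\kappa_{n,m_1,\cdots,m_d}$ left to be determined.

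Next I would substitute this ansatz into \eqref{Eq: general Lyapunov} and exploit the mixed-product rule $(A_0 \otimes \cdots \otimes A_d)(\vec{v}_0 \otimes \cdots \otimes \vec{v}_d) = (A_0\vec{v}_0)\otimes\cdots\otimes(A_d\vec{v}_d)$. Applying the eigen-relations $M_\tau \vec{e}_n^{\,\tau} = \lambda_n^\tau S_\tau \vec{e}_n^{\,\tau}$ and $S^{Tot}_i\vec{e}_{m_i}^{\,i} = \lambda_{m_i}^i M_i \vec{e}_{m_i}^{\,i}$ converts every factor into a common shape, so that the three groups of terms in \eqref{Eq: general Lyapunov} all act on the basis tensor $\vec{e}_n^{\,\tau}\otimes\vec{e}_{m_1}^{\,1}\otimes\cdots\otimes\vec{e}_{m_d}^{\,d}$ to produce the single tensor $(S_\tau\vec{e}_n^{\,\tau})\otimes(M_1\vec{e}_{m_1}^{\,1})\otimes\cdots\otimes(M_d\vec{e}_{m_d}^{\,d})$ multiplied, respectively, by the scalars $1$, by $\lambda_n^\tau\lambda_{m_i}^i$ (summed over $i$), and by $\gamma\lambda_n^\tau$. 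Hence the whole left-hand side becomes one sum over the eigenbasis in which each basis tensor carries the scalar weight $\kappa_{n,m_1,\cdots,m_d}\,\Lambda_{n,m_1,\cdots,m_d}$, with $\Lambda_{n,m_1,\cdots,m_d} = (1+\gamma\lambda_n^\tau) + \lambda_n^\tau\sum_{j=1}^d \lambda_{m_j}^j$, recovering precisely the factor in \eqref{Eq: thm k fraction_1}.

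To isolate a single coefficient I would then take the Euclidean inner product of both sides with a test tensor $\vec{e}_{n'}^{\,\tau}\otimes\vec{e}_{m_1'}^{\,1}\otimes\cdots\otimes\vec{e}_{m_d'}^{\,d}$, using $(\vec{u}_0\otimes\cdots\otimes\vec{u}_d)^T(\vec{v}_0\otimes\cdots\otimes\vec{v}_d) = \prod_{i=0}^d \vec{u}_i^T\vec{v}_i$. The bi-orthogonality above annihilates every cross term, leaving on the left only $\kappa_{n,m_1,\cdots,m_d}\,\Lambda_{n,m_1,\cdots,m_d}\,\big((\vec{e}_n^{\,\tau})^T S_\tau\vec{e}_n^{\,\tau}\big)\prod_{j=1}^d\big((\vec{e}_{m_j}^{\,j})^T M_j \vec{e}_{m_j}^{\,j}\big)$, while the right-hand side contracts $F$ against the same eigentensor to give the multidimensional inner product in the numerator of \eqref{Eq: thm k fraction_1}. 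Dividing yields the stated closed form; the case $d=1$ is identical once the Kronecker products are rewritten as the outer product $\vec{e}_n^{\,\tau}(\vec{e}_{m_1}^{\,1})^T$.

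The main obstacle is the spectral theory underpinning the two bi-orthogonality statements, rather than the algebra of the substitution, which is routine. The decoupling requires each pencil to admit a \emph{complete}, bi-orthogonal eigenbasis; this is automatic when the weight matrix is symmetric positive definite (the spatial mass matrices $M_j$ of Theorem \ref{Thm: Spatial Mass Matrices}), in which case a Cholesky factor reduces the pencil to a standard symmetric eigenproblem. The delicate case is the temporal pencil $(M_\tau,S_\tau)$, since the diagonal $S_\tau$ of Theorem \ref{Thm: Temporal Stiffness} is sign-indefinite and congruence-diagonalization of an indefinite symmetric pencil is not automatic; here I would instead exploit that $S_\tau$ is diagonal and nonsingular to invert it explicitly and verify that $S_\tau^{-1}M_\tau$ is diagonalizable with an $S_\tau$-orthogonal eigenbasis, which is exactly the property the argument consumes. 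Once completeness and bi-orthogonality are secured, the remaining bookkeeping proceeds as sketched above.
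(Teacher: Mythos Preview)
Your proposal is correct and follows essentially the same route as the paper: expand $\mathcal{U}$ in the tensor eigenbasis, push the Lyapunov operator through via the eigenrelations so that every summand becomes a scalar multiple of $(S_\tau\vec{e}_n^{\,\tau})\otimes(M_1\vec{e}_{m_1}^{\,1})\otimes\cdots\otimes(M_d\vec{e}_{m_d}^{\,d})$, and then contract against a test eigentensor to decouple the coefficients. The paper justifies the decoupling step by (somewhat loosely) invoking the structure of $S_\tau$ and $M_j$ from Theorems~\ref{Thm: Temporal Stiffness} and~\ref{Thm: Spatial Mass Matrices}, whereas you argue more carefully via symmetry of the pencils and flag the indefiniteness of $S_\tau$ as the only nontrivial spectral point; this extra care is well placed but does not change the argument's skeleton.
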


\begin{proof}
%
Let us consider the following generalised eigenvalue problems as
\begin{eqnarray}
\label{Eq: Spatial EigenProblem}
&S^{\, Tot}_{j}\,{\vec{e}_{m_j}}^{\, j} =\lambda^{\, j}_{m_j}\,M_{j}\,{\vec{e}_{m_j}}^{\, j},\quad m_j=1,\cdots, \mathcal{M}_j, \quad j=1,2,\cdots, d,& 
\end{eqnarray}
and
\begin{eqnarray}
\label{Eq: Temporal EigenProblem}
&M_{\tau}\,{\vec{e}_n}^{\,\,\tau}=\lambda^{\tau}_n\,S_{\tau}\,{\vec{e}_n}^{\,\,\tau},\quad n=1,2,\cdots, \mathcal{N}.&
\end{eqnarray}
Having the spatial and temporal eigenvectors determined in equations \eqref{Eq: Temporal EigenProblem} and \eqref{Eq: Spatial EigenProblem}, we can represent the unknown coefficient matrix  $\mathcal{U} $  in  \eqref{Eq: PG expansion} in terms of the aforementioned eigenvectors as
\begin{equation}
\label{Eq: rep U}
\mathcal{U} = 
\sum_{n=1}^{\mathcal{N}}
\,\,
\sum_{m_1= 1}^{\mathcal{M}_1}
\cdots 
\sum_{m_d= 1}^{\mathcal{M}_d}
\kappa_{ n,m_1,\cdots,\,m_d  } \,
\,\vec{e}_n^{\,\,\tau}\,
\otimes
\,{\vec{e}_{m_1}}^{\, 1}\,\,
\otimes
\cdots
\otimes
\,{\vec{e}_{m_d}}^{\, d},
\end{equation}
where $\kappa_{ n,m_1,\cdots,\,m_d } $ 
are obtained as follows. First, we take the multi-dimensional inner product of $\,\vec{e}_{q}^{\,\,\tau}
\,{\vec{e}_{p_1}}^{\, 1}
\cdots
\,{\vec{e}_{p_d}}^{\, d}$ on both sides of the Lyapunov equation \eqref{Eq: general Lyapunov} as
\begin{eqnarray}
\label{Eq: Lyap_2}
\nonumber
&(\,\vec{e}_q^{\,\,\tau}
\,{\vec{e}_{p_1}}^{\, 1}
\,{\vec{e}_{p_2}}^{\, 2}
\cdots
\,{\vec{e}_{p_d}}^{\, d}) 
\Big[
\,\,
S_{\tau} \otimes M_{1} \otimes \cdots \otimes M_{d}  & 
\\ \nonumber
& +
 \sum_{j=1}^{d}
\, 
 [
M_{\tau} \otimes M_{1}\otimes \cdots   \otimes M_{{j-1}} \otimes S_{{j}}^{Tot} \otimes M_{{j+1}}  \cdots \otimes M_{d} ] &
\\ \nonumber
&+\gamma\,\, M_{\tau} \otimes M_{1} \otimes \cdots \otimes M_{d} \Big] \mathcal{U}
= 
(\,\vec{e}_q^{\,\,\tau}
\,{\vec{e}_{p_1}}^{\, 1}
\cdots
\,{\vec{e}_{p_d}}^{\, d}) F. &
\hspace{1cm} 
\end{eqnarray} 
Then, by replacing (\ref{Eq: Spatial EigenProblem}) and (\ref{Eq: Temporal EigenProblem}) into (\ref{Eq: thm k fraction_1}) and re-arranging the terms, we get
\begin{eqnarray}
\label{Eq: fast solver_1112}
\nonumber
\sum_{n=1}^{\mathcal{N}}
\,\,
\sum_{m_1= 1}^{\mathcal{M}_1}
&\cdots & 
\sum_{m_d= d}^{\mathcal{M}_d}
\kappa_{ n,m_1,\cdots,\,m_d } \times\, 
\Big(
\,\,
\vec{e}_q^{\,\,\tau^T} S_{\tau} \vec{e}_n^{\,\,\tau}
\,\,\,
(\vec{e}_{p_1}^{\, 1})^{T}   M_{1} \vec{e}_{m_1}^{\, 1}
\,\, \,
\cdots
\,\,\,
(\vec{e}_{p_d}^{\, d})^{T}    M_{d} \, \vec{e}_{m_d}^{\, d}
\\ \nonumber
&+&
 \sum_{j=1}^{d}
\,
\vec{e}_q^{\,\,\tau^T} M_{\tau} \vec{e}_n^{\,\,\tau}
\,
(\vec{e}_{p_1}^{\, 1})^{T}   M_{1} \vec{e}_{m_1}^{\, 1}
\,
\cdots
\,
(\vec{e}_{p_j}^{\, j})^{T}   S_{{j}}^{\, {Tot}} \vec{e}_{m_{j}}^{\, j}
\,
(\vec{e}_{p_{j+1}}^{\, j+1})^{T}   M_{{j+1}} \vec{e}_{m_{j+1}}^{\, j+1}
\,
\cdots (\vec{e}_{p_d}^{\, d})^{T}   M_{d} \vec{e}_{m_d}^{\, d}
\\ \nonumber
&+&\gamma\,\, 
\vec{e}_q^{\,\,\tau^T} M_{\tau} \vec{e}_n^{\,\,\tau}
\,
(\vec{e}_{p_1}^{\, 1})^{T}   M_{1} \vec{e}_{m_1}^{\, 1}
\,
(\vec{e}_{p_1}^{\, 2})^{T}   M_{2} \vec{e}_{m_2}^{\, 2}
\,
\cdots
\,
(\vec{e}_{p_d}^{\, d})^{T}   M_{d} \vec{e}_{m_d}^{\, d}
\Big)
\\ \nonumber
&= &
(\,\vec{e}_q^{\,\,\tau}
\,{\vec{e}_{p_1}}^{\, 1}
\,{\vec{e}_{p_2}}^{\, 2}
\cdots
\,{\vec{e}_{p_d}}^{\, d}) F.
\end{eqnarray} 
Recalling that $S_{j}^{\, {Tot}} {\vec{e}_{m_j}}^{\, j}\,   = (\lambda_{m_j}^{\, j} M_{\, j} \,{\vec{e}_{m_j}}^{\, j})$ and $M_{\tau}\,{\vec{e}_n}^{\,\,\tau} =(\lambda^{\tau}_n\,S_{\tau}\,{\vec{e}_n}^{\,\,\tau})$, we have
\begin{eqnarray}
\nonumber
\sum_{n=1}^{\mathcal{N}}
\,\,
\sum_{m_1= 1}^{\mathcal{M}_1}
&\cdots & 
\sum_{m_d= 1}^{\mathcal{M}_d}
\kappa_{ n,m_1,\cdots,\,m_d  } \, \Big(
\,
\vec{e}_q^{\,\,\tau^T} S_{\tau} \vec{e}_n^{\,\,\tau}
\,
(\vec{e}_{p_1}^{\, 1})^{T}   M_{1} {\vec{e}_{m_1}}^{\, 1}
\, 
(\vec{e}_{p_2}^{\, 2})^{T}   M_{2} {\vec{e}_{m_2}}^{\, 2}
\,
\cdots (\vec{e}_{p_d}^{\, d})^{T} M_{d}\, {\vec{e}_{m_d}}^{\, d}) 
\\ \nonumber
& +&
 \sum_{j=1}^{d}
\,
\vec{e}_q^{\,\,\tau^T} (\lambda^{\tau}_n\,S_{\tau}\,{\vec{e}_n}^{\,\,\tau})
\,
(\vec{e}_{p_1}^{\, 1})^{T} M_{1} {\vec{e}_{m_1}}^{\, 1}
\cdots 
(\vec{e}_{p_j}^{\, j})^{T}  (\lambda^{\, j}_{m_{j}} M_{{j}} 
\vec{e}_{m_{j}}^{\, j}) 
\cdots
(\vec{e}_{p_d}^{\, d})^{T}   M_{d}\, {\vec{e}_{m_d}}^{\, d}
\,
\\ \nonumber
& +& \gamma\,\, 
\vec{e}_q^{\,\,\tau^T} (\lambda^{\tau}_n\,S_{\tau}\,{\vec{e}_n}^{\,\,\tau})
\,
(\vec{e}_{p_1}^{\, 1})^{T}   M_{1} {\vec{e}_{m_1}}^{\, 1}
\,
(\vec{e}_{p_2}^{\, 2})^{T}   M_{2} {\vec{e}_{m_2}}^{\, 2}
\,
\cdots
\,
(\vec{e}_{p_d}^{\, d})^{T}   M_{d} {\vec{e}_{m_d}}^{\, d}
\,
\Big)
\\
\nonumber
&
=& (\,\vec{e}_q^{\,\,\tau}
\,{\vec{e}_{p_1}}^{\, 1}
\,{\vec{e}_{p_2}}^{\, 2}
\cdots
\,{\vec{e}_{p_d}}^{\, d}) F.
\end{eqnarray} 
Therefore,
\begin{eqnarray}
\label{Eq: thm k fraction}
\nonumber
\kappa_{ n,m_1,\cdots,\,m_d  } =  \frac{(\,\vec{e}_n^{\,\,\tau}
\,{\vec{e}_{m_1}}^{\, 1}
\cdots
\,{\vec{e}_{m_d}}^{\, d}) F}
{
\Big[
(\vec{e}_n^{\,\,\tau^T} S_{\tau} \vec{e}_n^{\,\,\tau})
\prod_{j=1}^{d} ((\vec{e}_{m_j}^{\, j})^{T}   M_{j} {\vec{e}_{m_j}}^{\, j})
\Big] \times
\Big[
(1+\gamma\,\, 
\lambda^{\tau}_n)
+
\lambda^{\tau}_n
 \sum_{j=1}^{d}
(
\lambda^{\, j}_{m_j}
)
\Big]
}.
\end{eqnarray}
Then, we have
\begin{eqnarray}
\nonumber
&\sum_{n=1}^{\mathcal{N}}
\,\,
\sum_{m_1= 1}^{\mathcal{M}_1}
\cdots 
\sum_{m_d= 1}^{\mathcal{M}_d}
\kappa_{ n,m_1,\cdots,\,m_d  } \,
(\vec{e}_q^{\,\,\tau^T} S_{\tau} \vec{e}_n^{\,\,\tau})
((\vec{e}_{p_1}^{\, 1})^{T}  M_{1} {\vec{e}_{m_1}}^{\, 1})
\cdots
((\vec{e}_{p_d}^{\, d})^{T} M_{d} {\vec{e}_{m_d}}^{\, d})&
\\ \nonumber
&\times
\Big[
(1+\gamma\,\, 
\lambda^{\tau}_n)
+
\lambda^{\tau}_n
 \sum_{j=1}^{d}
(
\lambda^{\, j}_{m_j}
)
\Big]=
(\,\vec{e}_q^{\,\,\tau}
\,{\vec{e}_{p_1}}^{\, 1}
\,{\vec{e}_{p_2}}^{\, 2}
\cdots
\,{\vec{e}_{p_d}}^{\, d}) F.&
\end{eqnarray} 
Due to the fact that the spatial Mass ${M_{\, j}}$ and temporal stiffness matrices ${S_{\, \tau}}$  are diagonal (see Theorems \ref{Thm: Spatial Mass Matrices} and \ref{Thm: Temporal Stiffness}), we have $(\vec{e}_q^{\,\,\tau^T} S_{\tau} \vec{e}_n^{\,\,\tau}) = 0$ if $q\ne n$,  and also $((\vec{e}_{p_j}^{\, j})^{T}   M_{j} {\vec{e}_{m_j}}^{\, j}) = 0$ if $p_j \ne m_j$, which completes the proof for the case $d>1$.

Following similar steps for the two-dimensional problem, it is easy to see that if $d=1$, the relationship for $\kappa$ can be derived as 
\begin{eqnarray}
\label{Eq: Fast FPDE Solver d 1111}
\kappa_{ q,p_1 } = 
 \frac{
\vec{e}_q^{\,\,\tau^T}
F
\,{\vec{e}_{p_1}}^{\, 1}
}
{
(\vec{e}_q^{\,\,\tau^T} S_{\tau} \vec{e}_q^{\,\,\tau})
((\vec{e}_{p_1}^{\, 1})^{T}   M_{1} {\vec{e}_{p_1}}^{\, 1})
\Big[
(1+\gamma\,\, 
\lambda^{\tau}_n)
+\,
\lambda^{\tau}_n
\,
\lambda^{\, 1}_{m_1}
\Big]
}.
\end{eqnarray}
%
%
In \ref{Sec: Computational Considerations}, we present a computational method for the fast solver which reduces the computational cost significantly.
\end{proof}

\subsection{\textbf{Computational Considerations}}
\label{Sec: Computational Considerations}
%
%
Employing the fast solver in $(1+d)$ dimensional problem $d \, \geq \, 1$ reduces the dominant computational cost of the eigensolver from $\mathcal{O}(N^{2(1+d)})$ to  $\mathcal{O}(N^{2+d})$, which becomes even more efficient in higher dimensional problems. This approach is extensively discussed in \cite{zayernouri2015unified}.

\section{\textbf{Numerical Tests}}
\label{num-test}

We now examine the unified PG spectral method and the corresponding unified fast solver (\ref{Eq: rep U}) and (\ref{Eq: Fast FPDE Solver d 1111}) for (\ref{Eq: Genral_FPDE}) in the context of several numerical test cases in order to investigate the spectral/exponential rate of convergence in addition to the computational efficiency of the scheme. The corresponding force term $f$ in (\ref{Eq: Genral_FPDE}) is obtained in Appendix for the following test cases, listed as:
\vspace{0.3 in}

\noindent \textbf{Test case (I):} (smooth solutions with finite regularity) we consider the following exact solution to perform the temporal \textit{$p$-refinement} as
\begin{equation}
\label{test11111}
u^{exact} = t^{p_1} \times \Big{(}(1+x)^{p_2} - \epsilon (1+x)^{p_3}\Big{)}, 
\end{equation}
where $p_1 = 7\frac{2}{3}$, $p_2 = 6\frac{1}{3}$, $p_3 = 6\frac{2}{7}$ and $t \in [0, \, 2]$ and $ x \in [-1, \, 1].$
\vspace{0.3 in}

\noindent \textbf{Test case (II):} (spatially smooth function) we consider 
\begin{equation}
u^{exact} = t^{p_1} \times sin[n\pi \, (1+x)], 
\end{equation}
where $n=1$ and $p_1 = 6\frac{1}{3}$, for the exponential \textit{$p$-refinement}.
\vspace{0.3 in}

\noindent \textbf{Test case (III):} (high-dimensional problems) to perform the \textit{$p$-refinement} in higher dimensions ($d = 2, \, 3$), we choose the exact solution
\begin{equation}
\label{testIII}
u^{exact} = t^{p_1} \times \prod_{i=1}^{d}\, \Big{(}(1+x_i)^{p_{2i}} - \epsilon (1+x)^{p_{2i+1}}\Big{)},
\end{equation}
where $p_1 = 7\frac{2}{3}$, $p_2 = 6\frac{1}{3}$, $p_3 = 6\frac{2}{7}$, $p_4=7\frac{4}{5}$, $p_5=7\frac{1}{7}$, $p_6=7\frac{3}{5}$, $p_7=7\frac{1}{7}$  and $\epsilon_1 = 2^{p_2-p_3}$, $\epsilon_2 = 2^{p_4-p_5}$, $\epsilon_3 = 2^{p_6-p_7}$ in the hypercube domain as $\underbrace{[0, \, 1]\times[-1, \, 1]\times \cdots\times[-1, \, 1]}_{d\, \, times}$.
\vspace{0.3 in}

\noindent \textbf{Test case (IV):} (CPU time) to examine the efficiency of the method for the high-dimensional domain, we employ \eqref{testIII},
where $p_1 = 4$, $p_{2i} = 3\frac{1}{3}$, $p_{2i} = 3\frac{2}{7}$, $\epsilon_i=2^{p_{2i}-p_{2i+1}}$, $t \in [0,2]$, and $x \in [-1,1]^d$. In the following numerical examples, we illustrate the convergence rate and efficiency of the method, employing the test cases.

\begin{figure}[pt]
\center
\begin{subfigure}[b]{0.4\textwidth}
\centering
\includegraphics[width=2.0in]{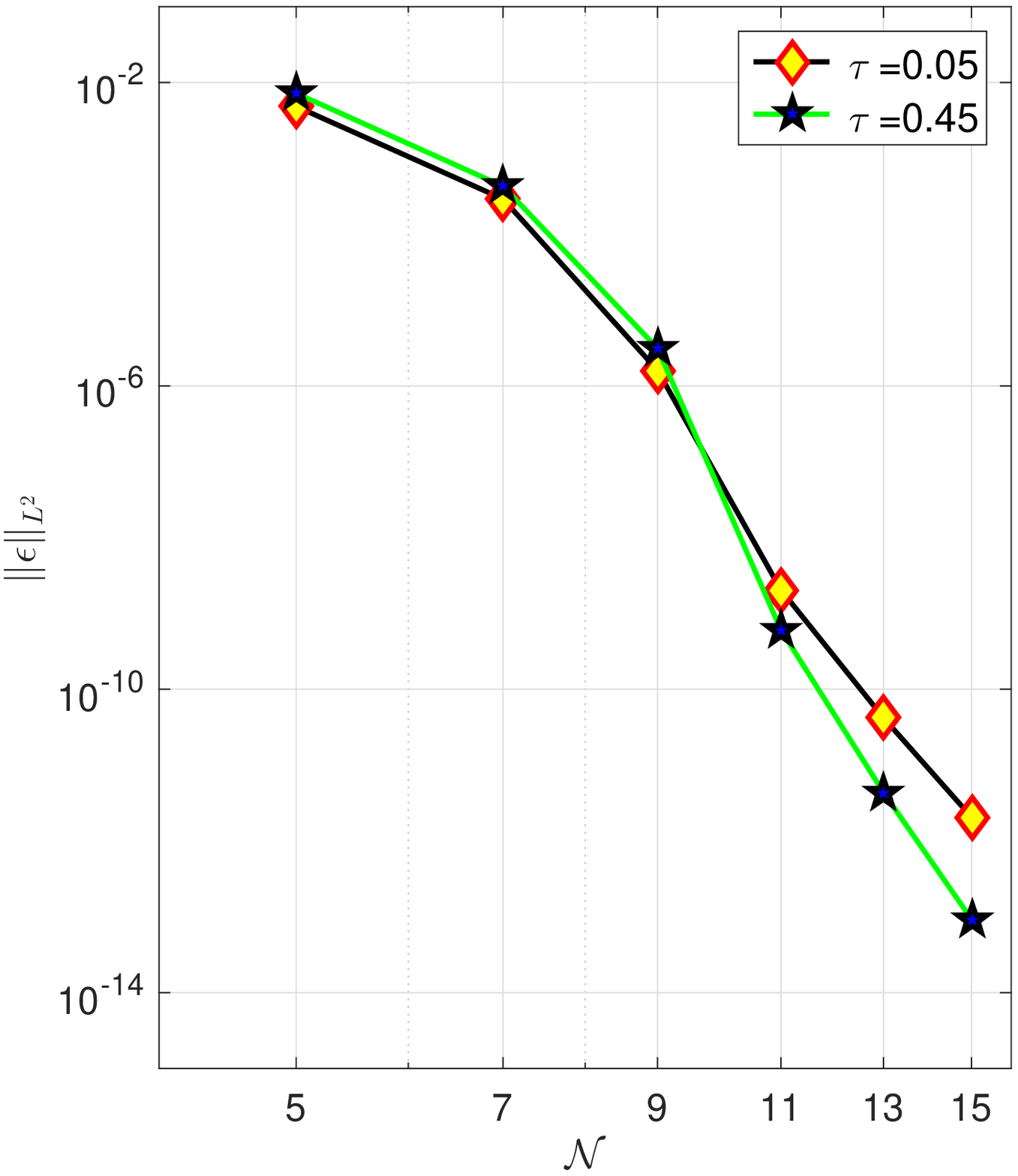}
\caption{}
\end{subfigure}
\begin{subfigure}[b]{0.4\textwidth}
\centering
\includegraphics[width=2.0in]{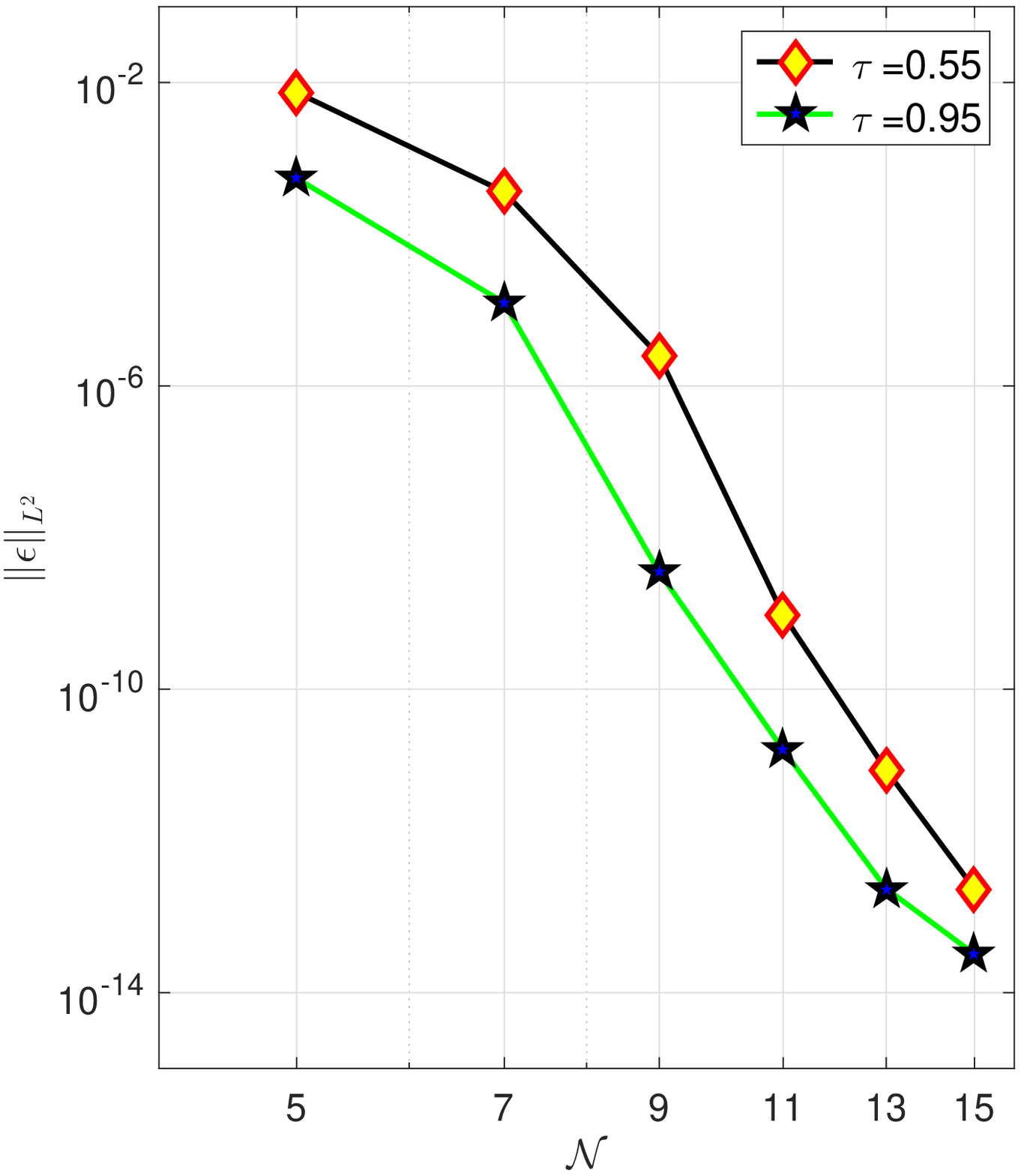}
\caption{}
\end{subfigure}
\caption{
\label{Fig: 1_PG Spectral Method for polynomial 1D-temporal}
Temporal \textit{$p$-refinement}: log-log scale $L^2$-error versus temporal expansion orders $\mathcal{N}$ for test case (I).
}
\end{figure}

\begin{figure}[pt]
\center
\begin{subfigure}[b]{0.4\textwidth}
\centering
\includegraphics[width=2.0in]{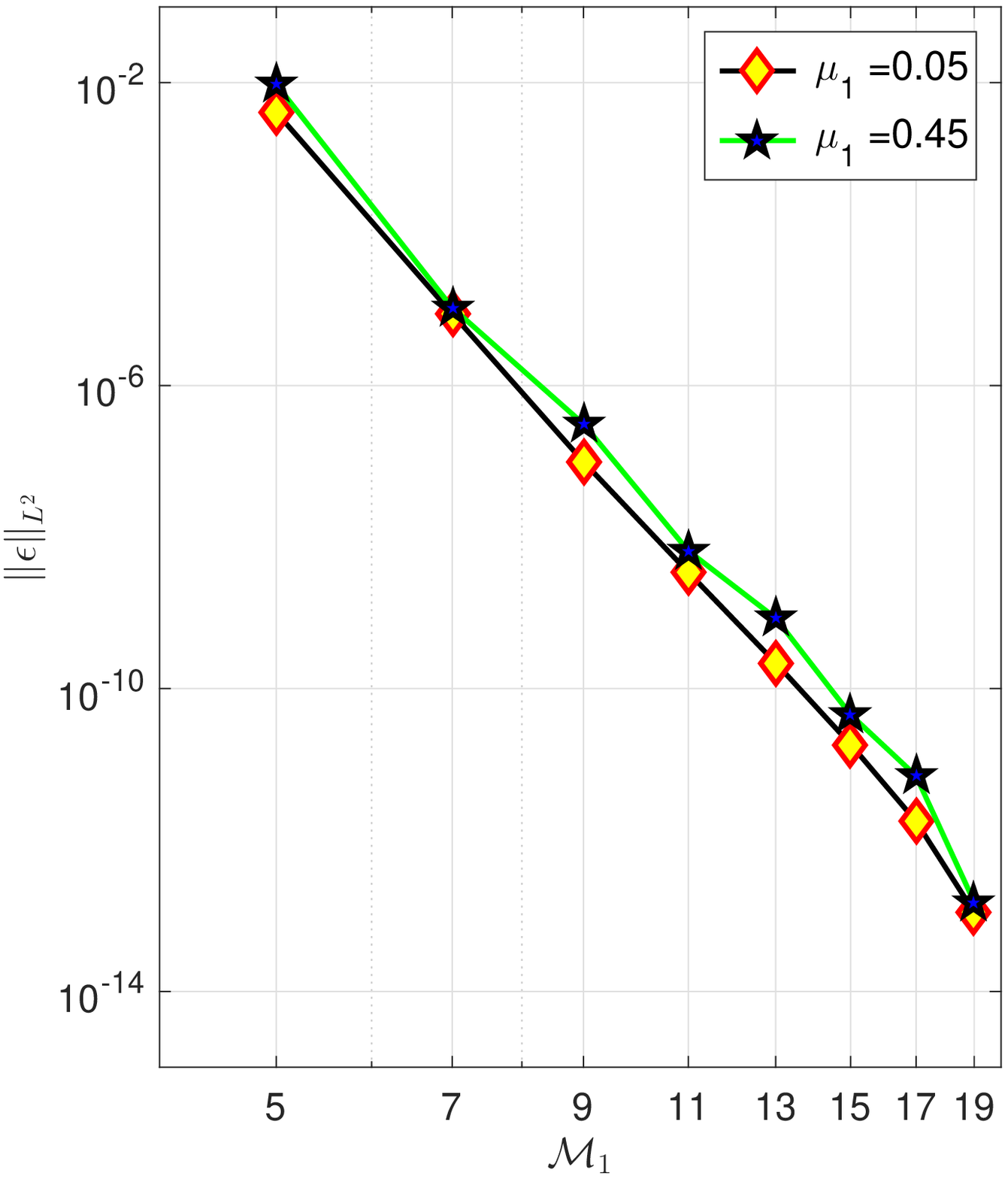}
\caption{}
\end{subfigure}
\begin{subfigure}[b]{0.4\textwidth}
\centering
\includegraphics[width=2.0in]{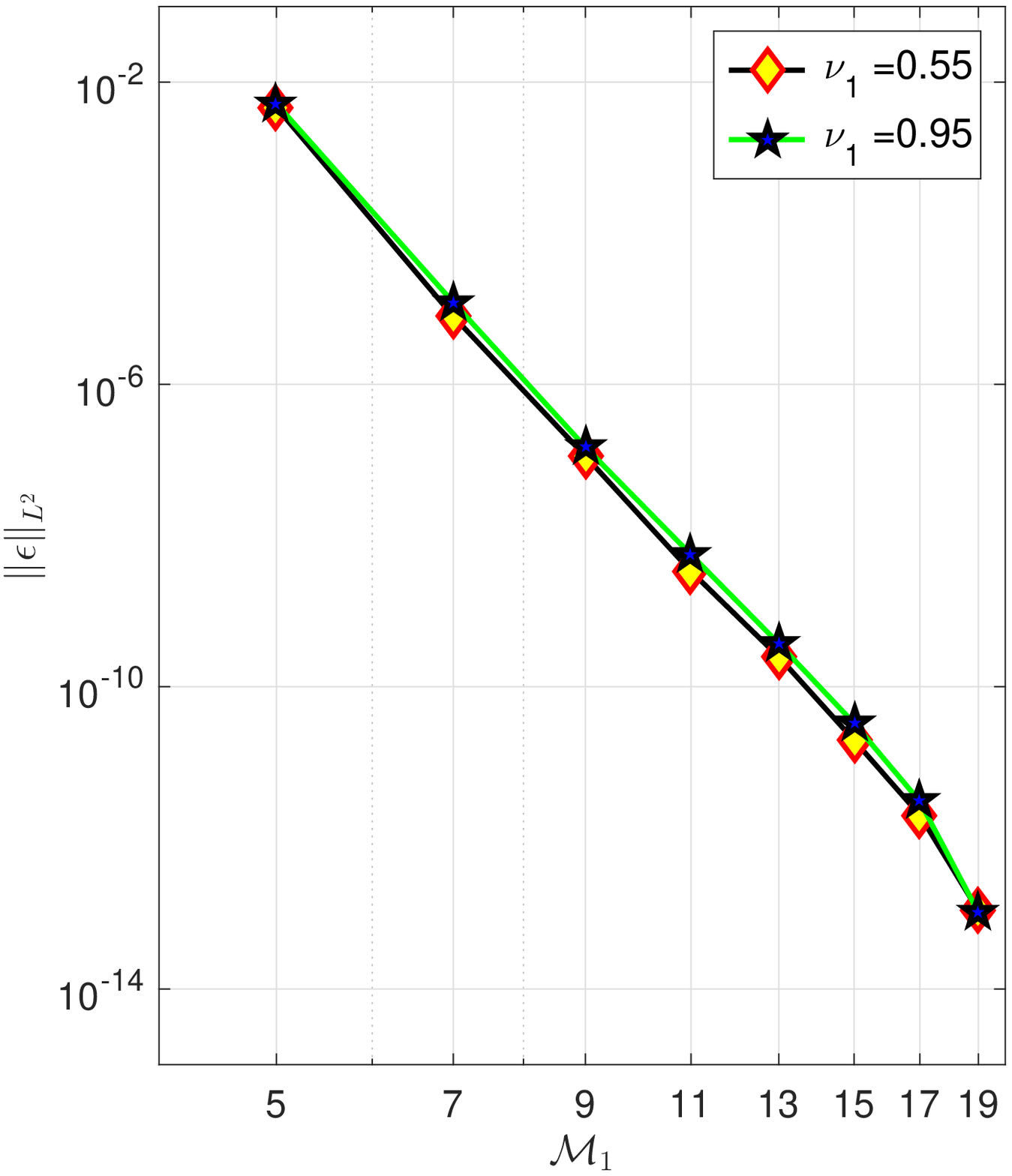}
\caption{}
\end{subfigure}
\caption{\label{Fig: 1_PG Spectral Method for polynomial 1D-spatial}
Spatial \textit{$p$-refinement}: log-log scale $L^2$-error versus spatial expansion orders $\mathcal{M}$ for the test case (I). 
}
\end{figure}

\begin{figure}[pt]
\center
\begin{subfigure}[b]{0.4\textwidth}
\centering
\includegraphics[width=2.0in]{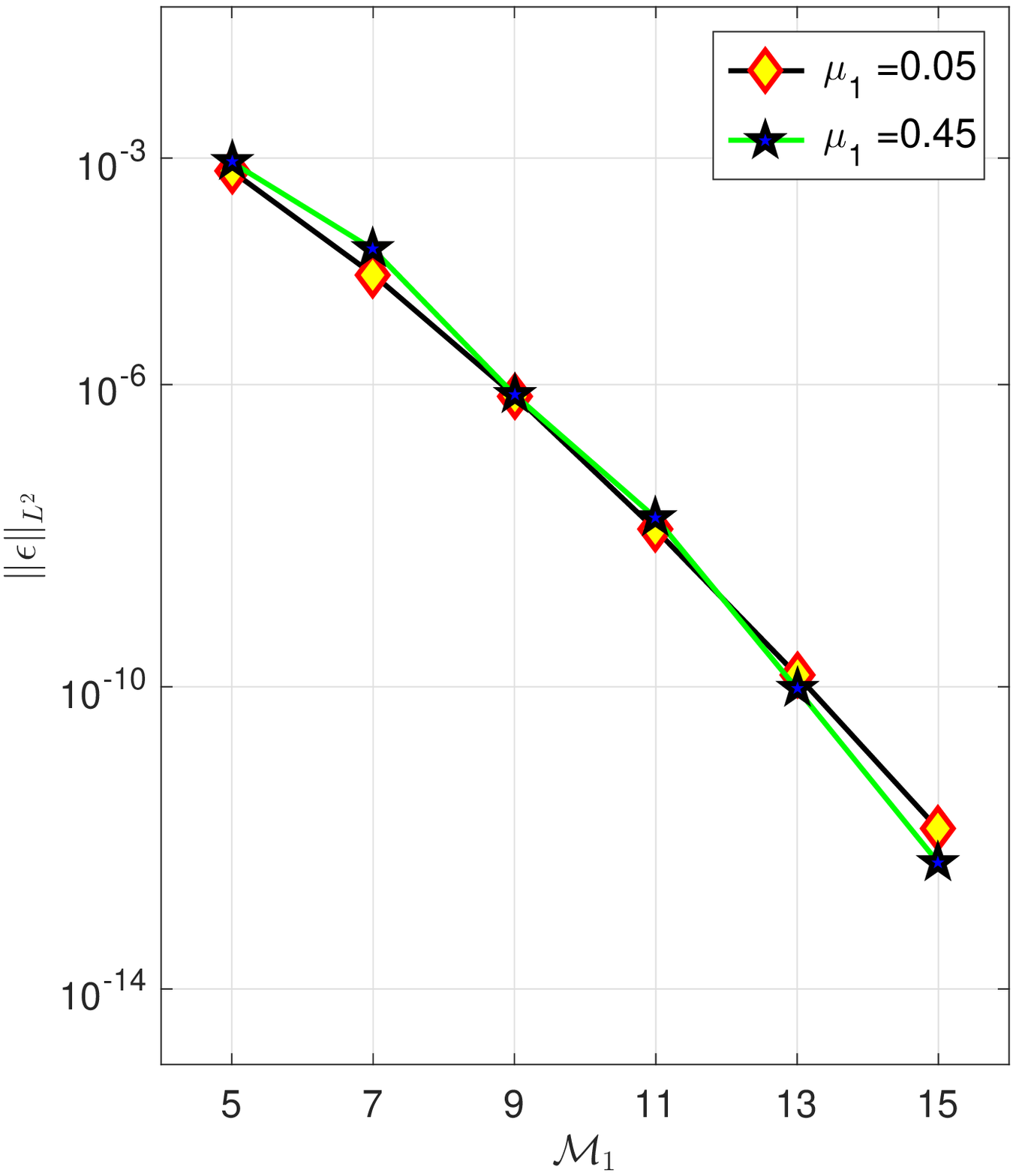}
\caption{}
\end{subfigure}
\begin{subfigure}[b]{0.4\textwidth}
\centering
\includegraphics[width=2.0in]{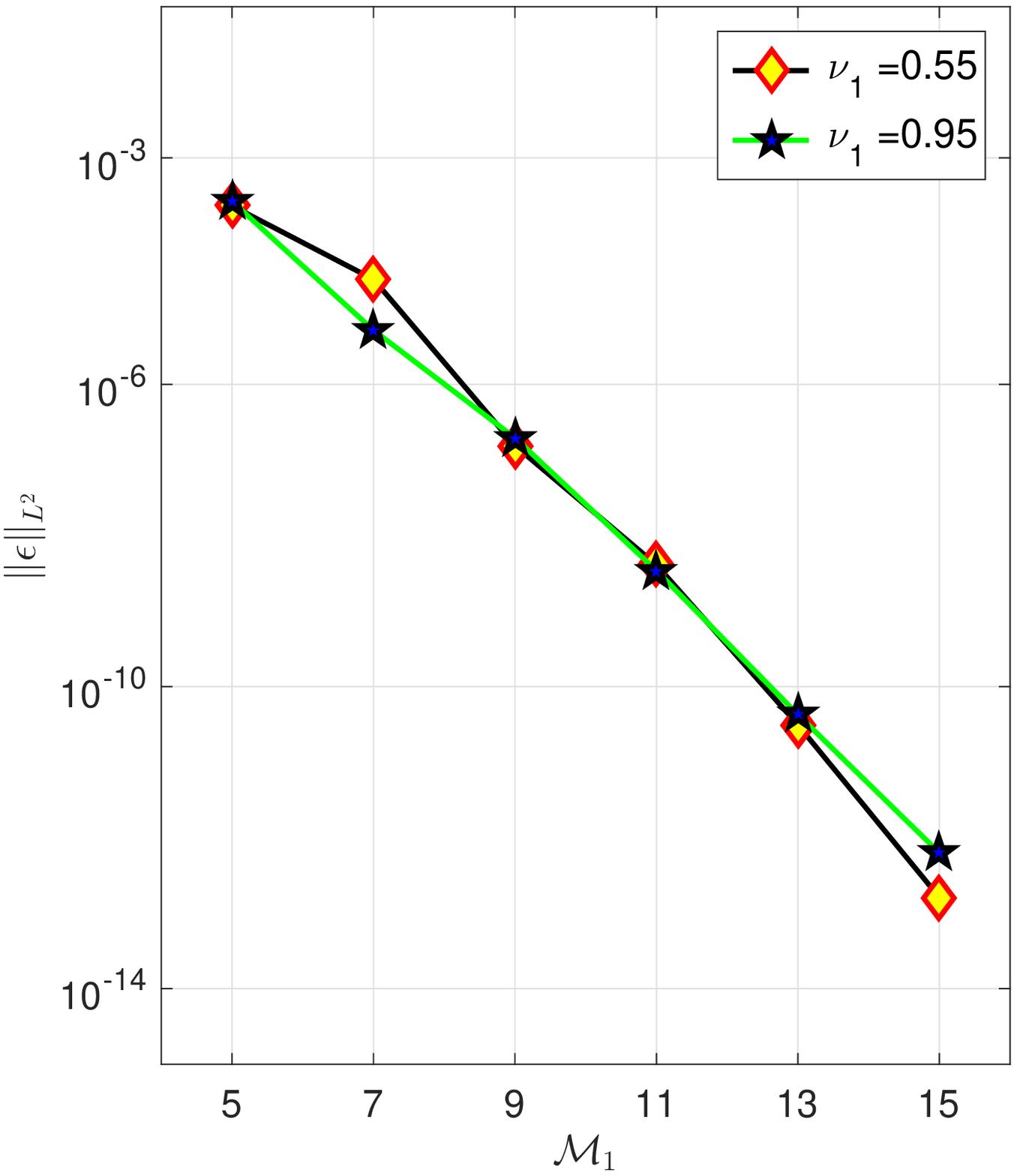}
\caption{}
\end{subfigure}
\caption{\label{Fig: 1_PG Spectral Method for Sinusoidal 1D-spatial}
Exponential convergence in the spatial \textit{$p$-refinement}: log-log scale $L^2$-error versus spatial expansion orders $\mathcal{M}$ for the test case (II). 
}
\end{figure}

\begin{figure}[h]
\center
\begin{subfigure}[b]{0.4\textwidth}
\centering
\includegraphics[width=2.0in]{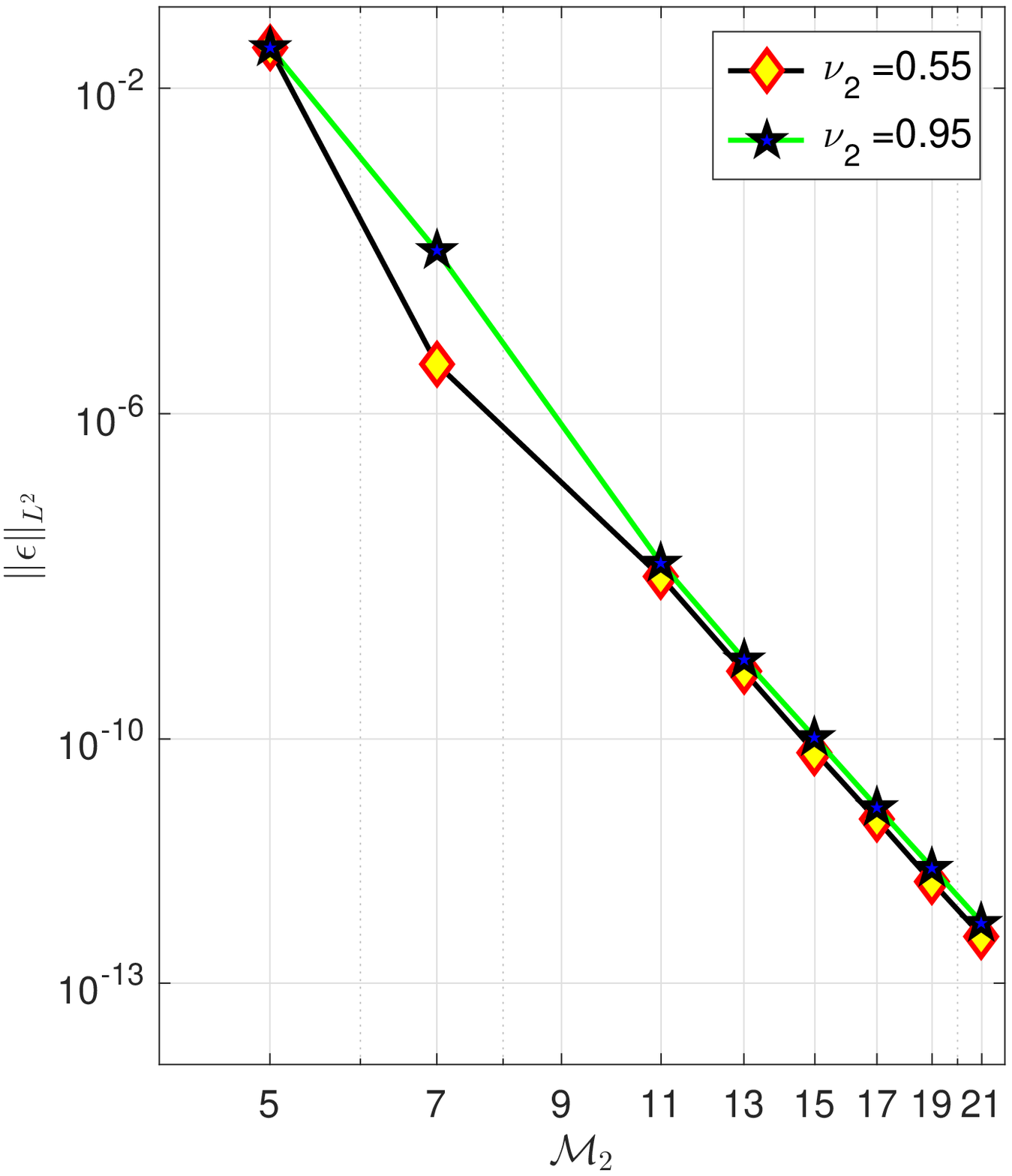}
\caption{}
\end{subfigure}
\begin{subfigure}[b]{0.4\textwidth}
\centering
\includegraphics[width=2.0in]{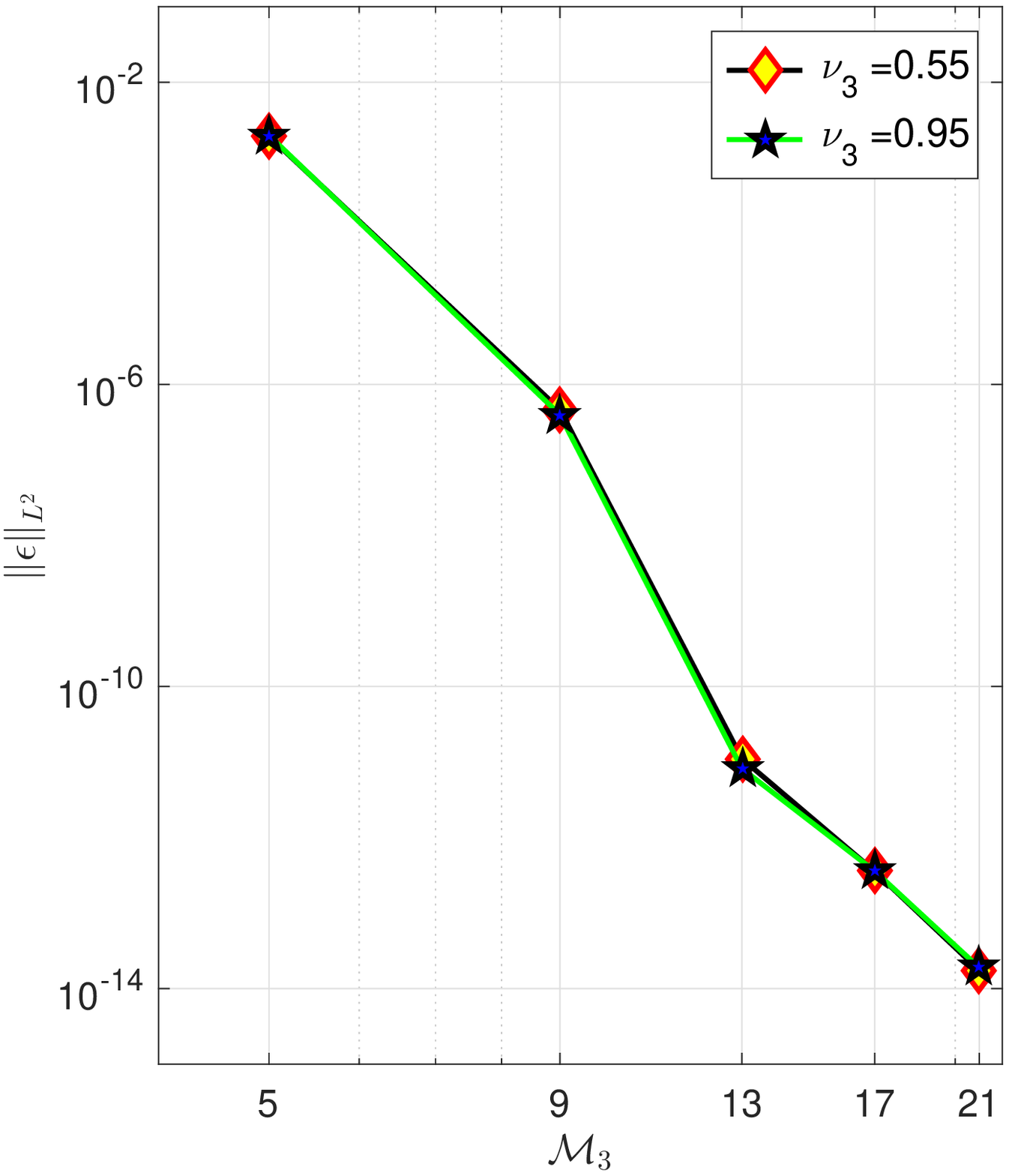}
\caption{}
\end{subfigure}
\caption{\label{Fig: 1_PG Spectral Method for polynomial 2D-spatial}
Spatial \textit{$p$-refinement}: log-log scale $L^{\infty}$-error versus spatial expansion orders $\mathcal{M}_2$, $\mathcal{M}_3$ in the test case (III) for the limit fractional orders of $\nu$.
}
\end{figure}

\begin{figure}[h]
\center
\begin{subfigure}[b]{0.4\textwidth}
\centering
\includegraphics[width=2.0in]{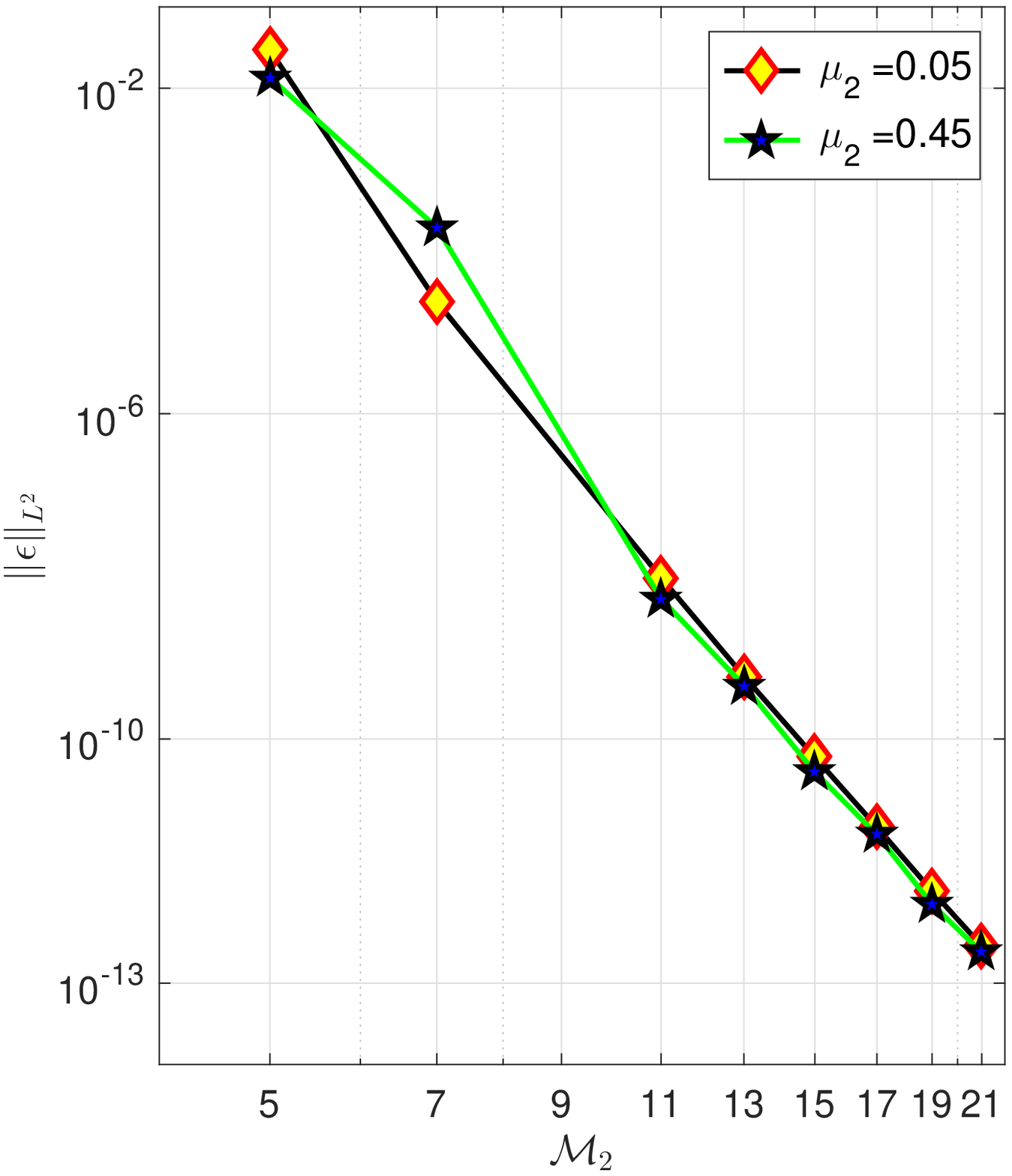}
\caption{}
\end{subfigure}
\begin{subfigure}[b]{0.4\textwidth}
\centering
\includegraphics[width=2.0in]{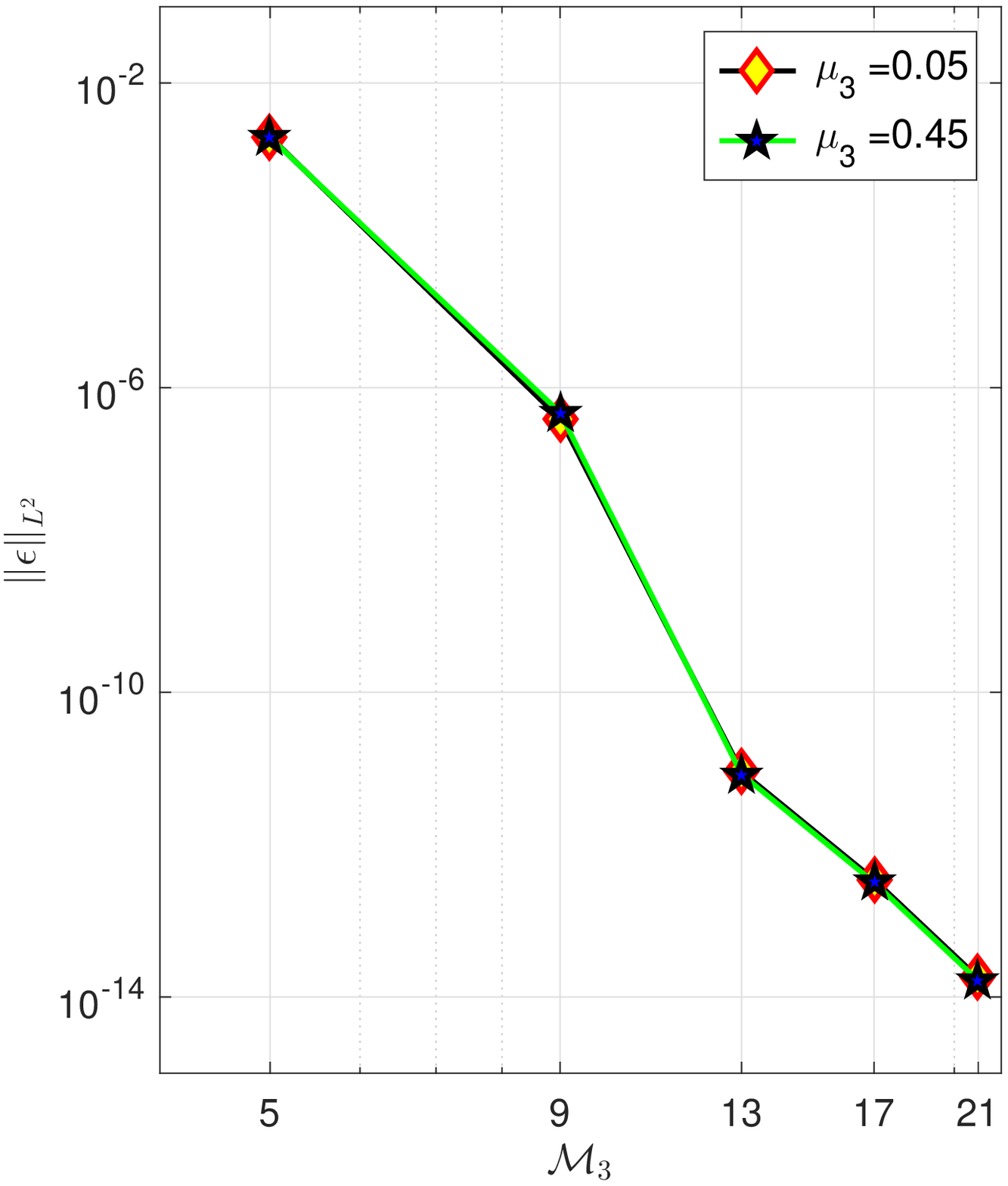}
\caption{}
\end{subfigure}
\caption{\label{Fig: 1_PG Spectral Method for polynomial 3D-spatial}
Spatial \textit{$p$-refinement}: log-log scale $L^{\infty}$-error versus spatial expansion orders $\mathcal{M}_2$, $\mathcal{M}_3$ for the test case (III) for the limit fractional orders of $\mu$.
}
\end{figure}

\subsection{{Numerical Test (I)}}
\label{Sec: Pol_Exact_FPDEs}
We plot the log-log scale $L^2$-error  versus temporal orders $\mathcal{N}$ in Fig. \ref {Fig: 1_PG Spectral Method for polynomial 1D-temporal} in a log-log scale plot for the test case (I) while $2\tau = \frac{1}{10}$, $\frac{9}{10}$, $2\mu_1 = \frac{5}{10}$, $2\nu_1 = \frac{15}{10}$, $T = 2$ and spatial expansion order is fixed ($\mathcal{M}=23$). Having the same set-up, we also consider $2\tau = \frac{11}{10}$, $\frac{19}{10}$ in the temporal direction to examine the spectral convergence of fractional wave equation. The $L^2$-error decays linearly in the log-log scale plot as temporal expansion order $\mathcal{N}$ increases in both cases, indicating the spectral convergence of PG method. In \cite{samiee2016Unified2}, we obtain the theoretical convergence rate of $ \Vert e \Vert_{L^{2}}$ and compare with the corresponding practical ones.     

\subsection{{Numerical Test (II)}}
\label{Sec: Pol_1_FPDEs}
Here, we perform the spatial \textit{$p$-refinement} while the temporal expansion order is fixed for the test case (I). In Fig. \ref{Fig: 1_PG Spectral Method for polynomial 1D-spatial}, spectral convergence of log-log scale $L^{\infty}$-error versus spatial expansion orders $\mathcal{M}_1$ is shown where $2\nu_1 = \frac{11}{10}$, $\frac{19}{10}$ in setup (a). We set $2\tau = \frac{6}{10}$, $2\mu_1 = \frac{5}{10}$, $T = 2$ and temporal expansion order is fixed ($\mathcal{N}=23$). In this case, the limit fractional orders of $\nu_1$ are examined, where both have the spectral convergence but with different rates. We also carried out the spatial \textit{$p$-refinement} for the limit fractional orders of $\mu_1$. The spectral convergence of the PG method is observed, where $2\mu_1 = \frac{1}{10}$, $\frac{9}{10}$ and $2\nu_1 = \frac{15}{10}$. To this end, we can conclude that the PG method in (1+1) dimensional space-time domain is spectrally accurate up to the order of $10^{-15}$.

\subsection{{Numerical Test (III)}}
\label{Sec: sin_Exact_FPDEs}
In Fig.~\ref {Fig: 1_PG Spectral Method for Sinusoidal 1D-spatial}, we plot $ \Vert e \Vert_{L^{2} }  = \Vert u-u^{ext} \Vert_{L^{2}}$ versus spatial expansion orders $\mathcal{M}$ for the test case (II), showing the spatial \textit{$p$-refinement}. In setup (a) $2\nu_1 = \frac{11}{10}$, $ \frac{19}{10}$ and $2\mu_1 = \frac{5}{10}$ and in setup (b) $2\mu_1 = \frac{1}{10}$, $ \frac{9}{10}$ and $2\nu_1 = \frac{15}{10}$ where $2\tau= \frac{6}{10}$. The temporal expansion order ($\mathcal{N}=23$) is fixed. The exponential convergence in the log-linear scale plot is illustrated clearly for the limit fractional orders of $\mu_1$ and $\nu_1$ in case spatial component of the exact solution is a sinusoidal smooth function.

\begin{table}[h]
\center
\caption{\label{Table: Higher-Dimensional FPDEs} Performance study and CPU time (in sec.) of the unified PG spectral method for the test
case (IV). In each step, we uniformly increase the bases order by one in all dimensions.
}
\vspace{0.1in}
%
%
\begin{tabular}
{l c c}
 & 2-D FADRE  &  \\
\hline
\hline
\multirow{2}{*}{ $\mathcal{N}$} & \multirow{2}{*}{ $\Vert \epsilon \Vert_{L^{\infty} } $}  & CPU Time \\
&&[Sec] \\
[3pt]
\hline
$5$ &  0.008 & 1.48 \\
\hline
$7$ &  0.0003 & 3.01 \\
\hline
$9$ &  1.69$\times10^{-6}$ & 3.48 \\
\hline
$15$ &  2.96$\times10^{-11}$ & 4.95 \\
\hline
\vspace{0.2in}
 \end{tabular}
\hspace{0.4in}
\begin{tabular}
{l c c}
 & 3-D FADRE  & \\
\hline
\hline
\multirow{2}{*}{ $\mathcal{N}$} & \multirow{2}{*}{ $\Vert \epsilon \Vert_{L^{\infty} } $}  & CPU Time  \\&&[Sec] \\[3pt]
\hline
$5$ &  0.01 & 1.43 \\
\hline
$7$ &  0.0003 & 5.39 \\
\hline
$9$ &  2.6$\times10^{-7}$ & 6.14 \\
\hline
$15$ &  2.41$\times10^{-10}$ & 7.54 \\
\hline
\vspace{0.2in}
 \end{tabular}
 \vspace{0.05in}
\begin{tabular}
{l c c}
%
 & 4-D FADRE  &  \\ 
\hline
\hline
\multirow{2}{*}{ $\mathcal{N}$} & \multirow{2}{*}{ $\Vert \epsilon \Vert_{L^{\infty} } $}  & CPU Time  \\
&&[Sec]
\\[3pt]
\hline
$5$ &  0.00005 & 3.56 \\
\hline
$7$ & 3.31$\times10^{-7}$ &  8.87\\
\hline
$9$ &  8.17$\times10^{-9}$ &  5.37 \\
\hline
$15$ &  9.70$\times10^{-12}$ &  55.78 
\\
\hline
 \end{tabular}
 \end{table}

\subsection{{Numerical Test (IV)}}
\label{Sec: Pol_High_FPDEs}

In addition to spatial/temporal \textit{$p$-refinement}, we perform \textit{$p$-refinement} for (1+2) and (1+3) as the higher dimensional domain in the test case (III). In Fig. \ref{Fig: 1_PG Spectral Method for polynomial 2D-spatial}, the spectral convergence of log-log $L^{\infty}$-error versus spatial expansion orders $\mathcal{M}_2$, $\mathcal{M}_3$  is shown. In setup (a),  $2\nu_2 = \frac{11}{10}$, $\frac{19}{10}$ while $2\nu_1 = \frac{15}{10}$, $2\mu_1 = \frac{4}{10}$ and $2\mu_2 = \frac{6}{10}$ and setup (b) $2\nu_3 = \frac{11}{10}$, $\frac{19}{10}$ while $2\nu_1 = \frac{14}{10}$, $2\nu_2 = \frac{16}{10}$, $2\mu_1 = \frac{3}{10}$, $2\mu_2 = \frac{5}{10}$ and $2\mu_3 = \frac{7}{10}$, where $2\tau=\frac{6}{10}$, $T=2$. Furthermore, we increase the maximum bases order uniformly in all dimensions. 

Similarly, we perform the spatial \textit{$p$-refinement} for the limit fractional orders of $\mu$ in FADE. We study setup (a) $2\mu_2 = \frac{1}{10}$, $\frac{9}{10}$ while $2\mu_1 = \frac{5}{10}$, and setup (b) $2\mu_3 = \frac{1}{10}$, $\frac{9}{10}$ while $2\mu_1 = \frac{4}{10}$, $2\mu_2 = \frac{6}{10}$. In both setups, $2\tau=\frac{6}{10}$, $2\nu_1=\frac{15}{10}$, $2\nu_2=\frac{15}{10}$, $T=2$. Furthermore, $\mathcal{N}=\mathcal{M}_1=\mathcal{M}_2=\mathcal{M}_3$ changes concurrently. In Fig. \ref{Fig: 1_PG Spectral Method for polynomial 3D-spatial}, the PG method shows spectral convergence for the limit fractional orders of $\mu$.  

\subsection{{Numerical Test (V)}}
\label{Sec: Pol_High_FPDEs_2}

To examine the efficiency of the PG method and the fast solver in high-dimensional problem, the convergence results and CPU time for test case (IV) are presented in Table \ref{Table: Higher-Dimensional FPDEs} for (1+1), (1+3) and (1+5) dimensional space-time hypercube domains where the error is measured by the essential norm $\Vert e \Vert_{\infty}$ in the test case (IV). The CPU time is obtained on {Intel (Xeon E52670) 2.5 GHz processor.} The presented PG method remains spectrally accurate in (1+5) dimensional time-space domain.


\section{Summary and Discussion}


We developed a new unified Petrov-Galerkin spectral method for a class of fractional partial differential equations with constant coefficients (\ref{Eq: Genral_FPDE}) in a ($1+d$)-dimensional \textit{space-time} hypercube, $d = 1, 2, 3$, etc, subject to homogeneous Dirichlet initial/boundary conditions. We employed \textit{Jacobi poly-fractonomial}s, as temporal basis/test functions, and the Legendre polynomials as spatial basis/test functions, yielding spatial mass matrices being independent of the spatial fractional orders. Additionally, we formulated the novel unified fast linear solver for the resulting high-dimensional linear system, which reduces the computational cost significantly. In fact, the main idea of the paper was to formulate a closed-form solution for the high-dimensional Lyapunov equation in terms of the eigensolutions up to the precision accuracy of computationally obtained eigensolutions.
The PG method has been illustrated to be spectrally accurate for power-law test cases in each dimension. Furthermore, exponential convergence is observed for a sinusoidal smooth function in a spatial \textit{p}-refinement. 
To check the stability and spectral convergence of the PG method, we carried out the corresponding discrete stability and error analysis of the method for (\ref{Eq: con-dim PG method}) in \cite{samiee2016Unified2}.
Despite the high accuracy and the efficiency of the method especially in higher-dimensional problems, treatment of FPDEs in complex geometries and FPDEs with variable coefficients will be studies in our future works.

\section*{Appendix}
\label{Sec: Appendix}

Here, we provide the force function based on the exact solutions.

\subsection*{$\bullet$ \textbf{Force term of test case (I)}}
\label{lemma_f_sin}
To obtain $f$ in \eqref{Eq: Genral_FPDE} based on \eqref{test11111}, first we need to calculate all fractional derivatives of $u^{ext}$. To satisfy the corresponding boundary conditions, $\epsilon_i = 2^{p_{2i}-p_{2i+1}}$. 
Take $X^{T} = t^{p_1}$ and $X_i^{S} = (1+\zeta_i)^{p_{2i}}-\epsilon_i \, (1+\zeta_i)^{p_{2i+1}}$, where $\zeta_i = 2\frac{x_i-a_i}{b_i-a_i}-1$ and $\zeta_i \, \in \, [-1\, , \, 1]$.
Considering (\ref{eq2}),
\begin{eqnarray}
\label{Eq: f_t_derivative}
&\prescript{}{0}{\mathcal{D}}_{t}^{2\tau} \, X^{T} = \frac{\Gamma[p_1+1]}{\Gamma[p_1+1-2\tau]} \, t^{p_1-2\tau} = (\frac{T}{2})^{p_1-2\tau}\frac{\Gamma[p_1+1]}{\Gamma[p_1+1-2\tau]} \, (1+\eta(t))^{p_1-2\tau}, & \quad
\end{eqnarray}
where $\eta (t) = 2(\frac{t}{T})-1$. Similarly,
\begin{eqnarray}
\label{Eq: f_s1_derivative}
&\prescript{}{a_i}{\mathcal{D}}_{x_i}^{2\mu_i} \, X_{i}^{S} = \Big{(} \frac{b_i-a_i}{2}  \Big{)}^{-2\mu_i} \Big{[} \frac{\Gamma[p_{2i}+1]}{\Gamma[p_{2i}+1-2\mu_{i}]}(1+\zeta_{2i}(x_i))^{p_{2i}-2\mu_{i}} \,- &
\nonumber
\\
 &\epsilon_{i} \frac{\Gamma[p_{2i+1}+1]}{\Gamma[p_{2i+1}+1-2\mu_{i}]}(1+\zeta_{2i}(x_i))^{p_{2i+1}-2\mu_{i}}\Big{]},& \quad
\end{eqnarray}
and
\begin{eqnarray}
\label{Eq: f_s1_derivative_disp}
&\prescript{}{a_i}{\mathcal{D}}_{x_i}^{2\nu_i} \, X_{i}^{S} = \Big{(} \frac{b_i-a_i}{2}  \Big{)}^{-2\nu_i -2} \Big{[} \frac{\Gamma[p_{2i}+1]}{\Gamma[p_{2i}+1-2\nu_{i}]}(1+\zeta_{2i}(x_i))^{p_{2i}-2\nu_{i}} \,- &
\nonumber
\\
 &\epsilon_{i} \frac{\Gamma[p_{2i+1}+1]}{\Gamma[p_{2i+1}+1-2\nu_{i}]}(1+\zeta_{2i}(x_i))^{p_{2i+1}-2\nu_{i}}\Big{]}.& \quad
\end{eqnarray}
Therefore,
\begin{eqnarray}
\label{Eg: f representation_222}
f &=& (\frac{T}{2})^{p_1-2\tau}\frac{\Gamma[p_1+1]}{\Gamma[p_1+1-2\tau]} \, (1+\eta)^{p_1-2\tau} \prod_{i=1}^{d} (1+\zeta_i)^{p_{2i}}-\epsilon_i \, (1+\zeta_i)^{p_{2i+1}} 
\nonumber
\\
&+& \sum_{i=1}(\frac{T}{2})^{p_1}(1+\eta)^{p_1} \Big{(}c_{l_i} \, \Big{(} \frac{b_i-a_i}{2}  \Big{)}^{-2\mu_i} \Big{[} \frac{\Gamma[p_{2i}+1]}{\Gamma[p_{2i}+1-2\mu_{i}]}(1+\zeta_{2i})^{p_{2i}-2\mu_{i}} \,
\nonumber
\\
&-&\epsilon_{i} \frac{\Gamma[p_{2i+1}+1]}{\Gamma[p_{2i+1}+1-2\mu_{i}]}(1+\zeta_{2i})^{p_{2i+1}-2\mu_{i}}\Big{]}\, 
\prod_{j=1, \, j\neq i}^{d}[(1+\zeta_j)^{p_{2j}}-\epsilon_j \, (1+\zeta_j)^{p_{2j+1}}]
\Big{)} 
\nonumber
\\
&-& \sum_{i=1}(\frac{T}{2})^{p_1}(1+\eta)^{p_1} \Big{(}\kappa_{l_i} \, \Big{(} \frac{b_i-a_i}{2}  \Big{)}^{-2\nu_i-2} \Big{[} \frac{\Gamma[p_{2i}+1]}{\Gamma[p_{2i}+1-2\mu_{i}]}(1+\zeta_{2i})^{p_{2i}-2\nu_{i}} \,
\nonumber
\\
&-& \epsilon_{i} \frac{\Gamma[p_{2i+1}+1]}{\Gamma[p_{2i+1}+1-2\nu_{i}]}(1+\zeta_{2i})^{p_{2i+1}-2\nu_{i}}\Big{]}\, \prod_{j=1, \, j\neq i}^{d}[(1+\zeta_j)^{p_{2j}}-
\epsilon_j \, (1+\zeta_j)^{p_{2j+1}}]
\Big{)}. \quad \quad
\end{eqnarray}


 \subsection*{$\bullet$ \textbf{Force term of test case (II)}}
\label{lemma_f_pwr}

Take $X^{T} = t^{p_1}$ and $X_i^{S} = sin\big{(}n \pi \zeta\big{)}$. Here, we approximate $X_i^{S}$ as
\begin{equation}
X^{S} = \Sigma_{j=1}^{N_s} (-1)^{2j-1}\frac{(n \pi \zeta)^{2j-1}}{(2j-1)!},
\end{equation}
where $N_s$ controls the level of approximation error. Taking the same steps of \eqref{Eg: f representation_222}, we obtain
\begin{eqnarray}
\label{Eg: f representation_223}
f &=& (\frac{T}{2})^{p_1-2\tau}\frac{\Gamma[p_1+1]}{\Gamma[p_1+1-2\tau]} \, (1+\eta)^{p_1-2\tau} \Sigma_{j=1}^{N_s} (-1)^{2j-1}\frac{(n \pi \zeta)^{2j-1}}{(2j-1)!} 
\nonumber
\\
&+& (\frac{T}{2})^{p_1}(1+\eta)^{p_1} \, \big{[}(c_l) \, \Big{(} \frac{b-a}{2}  \Big{)}^{-2\mu} \,  \Sigma_{j=1}^{N_s} (-1)^{2j-1}\frac{(n \pi \zeta)^{2j-1}}{(2j-1)!} \frac{\Gamma[2j]}{\Gamma[2j-2\mu]} \, \zeta ^{2j-2\mu}\, 
\nonumber
\\
& - &\, (\kappa_l) \, \Big{(} \frac{b-a}{2}  \Big{)}^{-2\nu-2} \,  \Sigma_{j=1}^{N_s} (-1)^{2j-1}\frac{(n \pi \zeta)^{2j-1}}{(2j-1)!}\frac{\Gamma[2j]}{\Gamma[2j-2\nu]} \, \zeta ^{2j-2\nu}\big{]}.\, \,
\end{eqnarray}

\section*{Acknowledgement}
This work was supported by the AFOSR Young Investigator Program (YIP) award on: “Data-Infused Fractional PDE Modelling and Simulation of Anomalous Transport” (FA9550-17-1-0150) and by the MURI/ARO on Fractional PDEs for Conservation Laws and Beyond: Theory, Numerics and Applications (W911NF- 15-1-0562).

\bibliographystyle{siam}
\bibliography{RFSLP_Refs2}

\end{document}